 \newtheorem{thm}{Theorem}[section]
 \newtheorem{lem}[thm]{Lemma}
 \newtheorem{prop}[thm]{Proposition}
 \theoremstyle{definition}
 \newtheorem{defn}[thm]{Definition}
 \theoremstyle{remark}
 \numberwithin{equation}{section}
\newcommand\ie{{i.e.},\;}
\newcommand{\ga}{\gamma}
\newcommand{\Ga}{\Gamma}
\newcommand\ld{\lambda}
\newcommand\De{\Delta}
\newcommand{\map}{\rightarrow}               
\newcommand{\sa}{{\rm sa}}
\newcommand{\hA}{{\hat A}}
\newcommand\hB{\hat B}
\newcommand{\hP}{{\hat P}}
\newcommand{\Hi}{\mathcal{H}}
\newcommand\BH{\mathcal{B(H)}}
\newcommand\eq[1]{(\ref{#1})}
\newcommand\Ain[1]{A\,\varepsilon\,#1}
\newcommand\daso{\delta^o}
\newcommand\dasi{\delta^i}
\newcommand\dasB[1]{\breve{\delta}(#1)}
\renewcommand\sp[1]{{\rm sp}(\hat A)}
\newcommand\ps[1]{\underline{#1}}        
\newcommand{\dG}{\ps{\mkern1mu\raise2.5pt\hbox{$\scriptscriptstyle|$}
        {\mkern-7mu\rm O}}}                 
\newcommand{\dOU}{\ps{\mkern1mu\raise2.5pt\hbox{$\scriptscriptstyle|$}
        {\mkern-7mu\rm U}}}               
\newcommand{\Sig}{\ps{\Sigma}}            
\newcommand{\R}{\mathcal{R}}                 
\newcommand\Subcl[1]{{\rm Sub}_{{\rm cl}}(#1)} 
\newcommand\Set{{\bf Set}}                    
\newcommand\SetC[1]{\Set^{{#1}^{\rm op}}}      
\newcommand\N{\mathcal{N}}
\newcommand\SetVNop{\SetC{\mathcal{V(N)}}}		
\newcommand\VN{\mathcal{V(N)}}
\renewcommand\O[1]{\mathcal{O}(#1)}						
\newcommand\bbC{\mathbb{C}}
\newcommand\bbR{\mathbb{R}}
\newcommand\bbN{\mathbb{N}}
\newcommand\PN{\mathcal{P}(\N)}
\newcommand\mc[1]{\mathcal{#1}}
\newcommand\Rlr{\ps{\bbR^\leftrightarrow}}
\newcommand\da[1]{\downarrow\!\!{#1}}
\newcommand\ra{\rightarrow}
\newcommand\lra{\longrightarrow}
\newcommand\lmt{\longmapsto}
\def\I{\mathbf{I}}
\def\IR{\I\bbR}
\def\RlrB{\underline{\R^\leftrightarrow_\bot}}
\def\RlrSB{\underline{S\R^\leftrightarrow_\bot}}
\def\R{\mathbb{R}}
\def\Q{\mathbb{Q}}
\def\setdef#1#2{\{#1\mid #2\}}             
\def\enset#1{\mathopen{ \{ }#1\mathclose{ \} }} 
\newcommand{\fdec}[3]{#1: #2 \longrightarrow #3}
\newcommand{\fdef}[3]{#1: #2 \longmapsto #3}
\def\Exists#1{\exists_{#1}\boldsymbol{.}\;}
\def\implies{\Rightarrow}
\def\ps{\mathcal{P}}
\def\pair#1{\left\langle#1\right\rangle}
\def\monic{\xymatrix@C=4ex @R=1.2ex{\, \ar@{>->} [r]  &}\!\!\!}
\def\lmonic#1{\xymatrix@C=4ex @R=1.2ex{\, \ar@{>->}^{#1} [r]  &}\!\!\!}
\def\epic{\!\!\xymatrix@C=3.7ex @R=1.2ex{\ar@{->>}[r]  &}\!\!}
\def\lepic#1{\!\!\xymatrix@C=3.7ex @R=1.2ex{\ar@{->>}^{#1}[r]  &}\!\!}
\def\inclusion{\!\xymatrix@C=3.9ex @R=1.2ex{\ar@{^{(}->} [r]  &}\!\!}
\def\larrow#1{\!\xymatrix@C=3.9ex @R=1.2ex{\ar@{->}^{#1} [r]  &}\!\!}
\def\Sig{\underline{\Sigma}}
\def\O{\mathcal{O}}
\def\sm{\sqsubseteq}
\def\lar{\sqsupseteq}
\def\waybelow{<\!\!<}
\def\osup{\bigsqcup\!}
\def\dsup{\bigsqcup~\!\!^\uparrow}
\def\up#1{\uparrow\!#1}
\def\down#1{\downarrow\!#1}
\def\uup#1{\twoheaduparrow\!#1}
\def\ddown#1{\twoheaddownarrow\!#1}
\def\Top{\mathbf{\mathsf{Top}}}
\def\Pos{\mathbf{\mathsf{Pos}}}
\def\I{\mathbf{I}}
\def\IR{\I\R}
\def\IQ{\I\Q}
\def\one{\mathbf{1}}
\def\Rlr{\underline{\R^\leftrightarrow}}
\def\RlrS{\underline{S\R^\leftrightarrow}}
\def\aA{\mathcal{A}}
\def\aS{\mathcal{S}}
\def\aF{\mathcal{F}}
\def\Cc{\mathbb{C}}
\def\wb{\waybelow}
\def\rmean#1{\mathopen{|}#1\mathclose{|}}
\definecolor{darkgreen}{rgb}{0,.66,0}
\newcommand\comp{\cdot}
\definecolor{orange}{rgb}{1,0.5,0}
\begin{document}

%
%
%
%
%
%
%
%
%

\title[Unsharp Values, Domains and Topoi]
 {Unsharp Values, Domains and Topoi}

\author[Andreas D\"oring]{Andreas D\"oring}

\address{%
Quantum Group\\
Department of Computer Science\\
University of Oxford\\
Wolfson Building\\
Parks Road\\
Oxford OX1 3QD, UK}

\email{andreas.doering@cs.ox.ac.uk}


\author[Rui Soares Barbosa]{Rui Soares Barbosa}

\address{%
Quantum Group\\
Department of Computer Science\\
University of Oxford\\
Wolfson Building\\
Parks Road\\
Oxford OX1 3QD, UK}
\email{rui.soaresbarbosa@wolfson.ox.ac.uk}

\subjclass{Primary 81P99; Secondary 06A11, 18B25, 46L10}

\keywords{Topos approach, domain theory, intervals, unsharp values, von Neumann algebras}

\date{January 31, 2011}

\begin{abstract}
  The so-called topos approach provides a radical reformulation of quantum theory. Structurally, quantum theory in the topos formulation is very similar to classical physics. There is a state object $\Sig$, analogous to the state space of a classical system, and a quantity-value object $\Rlr$, generalising the real numbers. Physical quantities are maps from the state object to the quantity-value object -- hence the `values' of physical quantities are not just real numbers in this formalism. Rather, they are families of real intervals, interpreted as `unsharp values'. We will motivate and explain these aspects of the topos approach and show that the structure of the quantity-value object $\Rlr$ can be analysed using tools from domain theory, a branch of order theory that originated in theoretical computer science. Moreover, the base category of the topos associated with a quantum system turns out to be a domain if the underlying von Neumann algebra is a matrix algebra. For general algebras, the base category still is a highly structured poset. This gives a connection between the topos approach, noncommutative operator algebras and domain theory. In an outlook, we present some early ideas on how domains may become useful in the search for new models of (quantum) space and space-time.
\end{abstract}

\maketitle
\begin{flushright}
	\textit{``You cannot depend on your eyes\\
	 when your imagination is out of focus.''}\\
	 Mark Twain (1835--1910)
\end{flushright}
\section{Introduction}
The search for a theory of quantum gravity is ongoing. There is a range of approaches, all of them differing significantly in scope and technical content. Of course, this is suitable for such a difficult field of enquiry. Most approaches accept the Hilbert space formalism of quantum theory and try to find extensions, additional structures that would capture gravitational aspects, or reproduce them from the behaviour of underlying, more fundamental entities.

Yet, standard quantum theory is plagued with many conceptual difficulties itself. Arguably, these problems get more severe when we try to take the step to quantum gravity and quantum cosmology. For example, in the standard interpretations of quantum theory, measurements on a quantum system by an external classical observer play a central role. This concept clearly becomes meaningless if the whole universe is to be treated as a quantum system. Moreover, standard quantum theory and quantum field theory are based on a continuum picture of space-time. Mathematically, the continuum in the form of the real numbers underlies all structures like Hilbert spaces, operators, manifolds, and differential forms (and also strings and loops). If space-time fundamentally is not a smooth continuum, then it may be wrong to base our mathematical formalism on the mathematical continuum of the real numbers.

These considerations motivated the development of the \emph{topos approach} to the formulation of physical theories, and in particular the topos approach to quantum theory. In a radical reformulation based upon structures in suitable, physically motivated topoi, all aspects of quantum theory -- states, physical quantities, propositions, quantum logic, etc. -- are described in a novel way. As one aspect of the picture emerging, physical quantities take their values not simply in the real numbers. Rather, the formalism allows to describe `unsharp', generalised values in a systematic way. In this article, we will show that the mathematical structures used to formalise unsharp values can be analysed using techniques from domain theory. We will take some first steps connecting the topos approach with domain theory.

Domain theory is a branch of order theory and originated in theoretical computer science, where it has become an important tool. Since domain theory is not well-known among physicists, we will present all necessary background here. Recently, domain theory has found some applications in quantum theory in the work of Coecke and Martin \cite{CM11} and in general relativity in the work of Martin and Panangaden \cite{MP06}. Domain theory also has been connected with topos theory before, in the form of synthetic domain theory, but this is technically and conceptually very different from our specific application.

The plan of the paper is as follows: in section \ref{Sec_ToposApproach}, we will present a sketch of the topos approach to quantum theory, with some emphasis on how generalised, `unsharp' values for physical quantities arise. In section \ref{Sec_DomainTh}, we present some background on domain theory. In section \ref{Sec_RlrAsDomain}, it will be shown that the structure of the quantity-value object $\Rlr$, a presheaf whose global elements are the unsharp values, can be analysed with the help of domain-theoretical techniques. Section \ref{Sec_V(N)AsDomain} shows that the base category $\VN$ of the topos $\SetVNop$ associated with a quantum system is a directed complete poset, and moreover an algebraic domain if $\N$ is a matrix algebra. Physically, $\VN$ is the collection of all classical perspectives on the quantum system. In section \ref{Sec_VNNoDomainInInfDim}, we show that the poset $\VN$ is not continuous, and hence not a domain, for non-matrix algebras $\N$, and in section \ref{Sec_Outlook}, we present some speculative ideas on how domains may become useful in the construction of new models of space and space-time adequate for quantum theory and theories `beyond quantum theory' in the context of the topos approach. Section \ref{Sec_Conclusion} concludes. 

\section{The Topos Approach, Contexts and Unsharp Values}			\label{Sec_ToposApproach}
\paragraph{Basic ideas.} In recent years, the \emph{topos approach} to the formulation of physical theories has been developed by one of us (AD), largely in collaboration with Chris Isham \cite{DI(1),DI(2),DI(3),DI(4),Doe07b,DI(Coecke)08,Doe08,Doe09,Doe10}. This approach originates from works by Isham \cite{Ish97} and Isham/Butterfield \cite{IB98,IB99,IB00,IB00b,IB02}. Landsman et al. have presented a closely related scheme for quantum theory \cite{HLS09,CHLS09,HLS09b,HLS11}, developing some aspects topos-internally, and Flori has developed a history version \cite{Flo09}.

The main goal of the topos approach is to provide a framework for the formulation of `neo-realist' physical theories. Such a theory describes a physical system by (i) a state space, or more generally, a state object $\Sigma$, (ii) a quantity-value object $\mc R$, where physical quantities take their values, and (iii) functions, or more generally, arrows $f_A:\Sigma\ra\mc R$ from the state object to the quantity-value object corresponding to physical quantities like position, momentum, energy, spin, etc. Both the state object and the quantity-value object are objects in a \emph{topos}, and the arrows between them representing physical quantities are arrows in the topos. Roughly speaking, a topos is a mathematical structure, more specifically a category, that can be seen as a universe of generalised sets and generalised functions between them. 

Each topos has an \emph{internal logic} that is of intuitionistic type. In fact, one typically has a multivalued, intuitionistic logic and not just two-valued Boolean logic as in the familiar topos $\Set$ of sets and functions. One main aspect of the topos approach is that it makes use of the internal logic of a given topos to provide a logic for a physical system. More specifically, the subobjects of the state object, or a suitable subfamily of these, are the representatives of \emph{propositions} about the values of physical quantities. In the simplest case, one considers propositions of the form ``$\Ain\De$'', which stands for ``the physical quantity $A$ has a value in the (Borel) set $\De$ of real numbers''.

As an example, consider a classical system: the topos is $\Set$, the state object is the usual state space, a symplectic manifold $\mc S$, and a physical quantity $A$ is represented by a function $f_A$ from $\mc S$ to the set of real numbers $\bbR$, which in this case is the quantity-value object. The subset $T\subseteq\mc S$ representing a proposition ``$\Ain\De$'' consists of those states $s\in\mc S$ for which $f_A(s)\in\De$ holds (\ie\;$T=f_A^{-1}(\De)$). If we assume that the function $f_A$ representing the physical quantity $A$ is (at least) measurable, then the set $T$ is a Borel subset of the state space $\mc S$. Hence, in classical physics the representatives of propositions are the Borel subsets of $\mc S$. They form a $\sigma$-complete \emph{Boolean algebra}, which ultimately stems from the fact that the topos $\Set$ in which classical physics is formulated has the familiar two-valued Boolean logic as its internal logic. Of course, we rarely explicitly mention $\Set$ and its internal logic -- it is just the usual mathematical universe in which we formulate our theories. As an underlying structure, it usually goes unnoticed.
\medskip
\paragraph{A topos for quantum theory.} For non-relativistic quantum theory, another topos is being used though. The details are explained elsewhere, see \cite{Doe07b,Doe10} for an introduction to the topos approach and \cite{DI(Coecke)08} for a more detailed description. The main idea is to use \emph{presheaves} over the set $\VN$ of abelian subalgebras of the nonabelian von Neumann algebra $\N$ of physical quantities.\footnote{To be precise, we only consider abelian von Neumann subalgebras $V$ of $\N$ that have the same unit element as $\N$. In the usual presentation of this approach, the trivial algebra $\bbC\hat 1$ is excluded from $\VN$. However, here we will keep the trivial algebra as a bottom element of $\VN$. We will occassionally point out which results depend on $\VN$ having a bottom element.} $\VN$ is partially ordered under inclusion; the topos of presheaves over $\VN$ is denoted as $\SetVNop$. The poset $\VN$, also called the \emph{context category}, is interpreted as the collection of all classical perspectives on the quantum system: each \emph{context} (abelian subalgebra) $V\in\VN$ provides a set of commuting self-adjoint operators, representing compatible physical quantities. The poset $\VN$ keeps track of how these classical perspectives overlap, \ie to which degree they are mutually compatible. Presheaves over $\VN$, which are contravariant functors from $\VN$ to $\Set$, automatically encode this information, too. A presheaf is not a single set, but a `varying set': a family $\underline P=(\underline P_V)_{V\in\VN}$ of sets indexed by elements from $\VN$, together with functions $\underline P(i_{V'V}):\underline P_V\ra\underline P_{V'}$ between the sets whenever there is an inclusion $i_{V'V}:V'\ra V$ in $\VN$.

The state object for quantum theory is the so-called \emph{spectral presheaf} $\Sig$ that is given as follows:
\begin{itemize}
	\item To each abelian subalgebra $V\in\VN$, one assigns the Gel'fand spectrum $\Sig_V$ of the algebra $V$;
	\item to each inclusion $i_{V'V}:V'\rightarrow V$, one assigns the function $\Sig(i_{V'V}):\Sig_V\rightarrow\Sig_{V'}$ that sends each $\ld\in\Sig_V$ to its restriction $\ld|_{V'}\in\Sig_{V'}$.
\end{itemize}
One can show that propositions of the form ``$\Ain\De$'' correspond to so-called \emph{clopen subobjects} of $\Sig$. The set $\Subcl{\Sig}$ of clopen subobjects is the analogue of the set of Borel subsets of the classical state space. Importantly, $\Subcl{\Sig}$ is a complete \emph{Heyting algebra}, which relates to the fact that the internal logic of the presheaf topos $\SetVNop$ is intuitionistic (and not just Boolean). Note that unlike in Birkhoff-von Neumann quantum logic, which is based on the non-distributive lattice $\PN$ of projections in the algebra $\N$, in the topos scheme propositions are represented by elements in a distributive lattice $\Subcl{\Sig}$. This allows to give a better interpretation of this form of quantum logic \cite{DI(Coecke)08,Doe09}. The map from the $\PN$ to $\Subcl{\Sig}$ is called \emph{daseinisation of projections}. Its properties are discussed in detail in \cite{Doe10}.
\medskip
\paragraph{Unsharp values.} In this article, we will mostly focus on the quantity-value object for quantum theory and its properties. Like the state object $\Sig$, the quantity-value object, which will be denoted $\Rlr$, is an object in the presheaf topos $\SetVNop$. The topos approach aims to provide models of quantum systems that can be interpreted as realist (or as we like to call them, neo-realist, because of the richer intuitionistic logic coming from the topos). One aspect is that physical quantities should have values at all times, independent of measurements. Of course, this immediately meets with difficulties: in quantum theory, we cannot expect physical quantities to have sharp, definite values. In fact, the Kochen-Specker theorem shows that under weak and natural assumptions, there is no map from the self-adjoint operators to the real numbers that could be seen as an assignment of values to the physical quantities represented by the operators.\footnote{The conditions are (a) each self-adjoint operator $\hA$ is assigned an element of its spectrum and (b) if $\hB=f(\hA)$ for two self-adjoint operators $\hA,\hB$ and a Borel function $f$, then the value $v(\hB)=v(f(\hA))$ assigned to $\hB$ is $f(v(\hA))$, where $v(\hA)$ is the value assigned to $\hA$.} The Kochen-Specker theorem holds for von Neumann algebras \cite{Doe05}.

The simple idea is to use \emph{intervals}, interpreted as `unsharp values', instead of sharp real numbers. The possible (generalised) values of a physical quantity $A$ are real intervals, or more precisely, real intervals intersected with the spectrum of the self-adjoint operator $\hA$ representing $A$. In our topos approach, each self-adjoint operator $\hA$ in the algebra $\N$ of physical quantities is mapped to an arrow $\dasB{\hA}$ from the state object $\Sig$ to the quantity-value object $\Rlr$. (The latter object will be defined below.)

We will not give the details of the construction of the arrow $\dasB{\hA}$ (see \cite{DI(Coecke)08,Doe10}), but we present some physical motivation here. For this, consider two contexts $V,V'\in\VN$ such that $V'\subset V$, that is, $V'$ is a smaller context than $V$. `Smaller' means that there are fewer physical quantities available from the classical perspective described by $V'$ than from $V$, hence $V'$ gives a more limited access to the quantum system. The step from $V$ to $V'\subset V$ is interpreted as a process of coarse-graining.

For example, we may be interested in the value of a physical quantity $A$. Let us assume that the corresponding self-adjoint operator $\hA$ is contained in a context $V$, but not in a context $V'\subset V$. For simplicity, let us assume furthermore that the state of the quantum system is an eigenstate of $\hA$. Then, from the perspective of $V$, we will get a sharp value, namely the eigenvalue of $\hA$ corresponding to the eigenstate. But from the perspective of $V'$, the operator $\hA$ is not available, so we have to approximate $\hA$ by self-adjoint operators in $V'$. One uses one approximation from below and one from above, both taken with respect to the so-called spectral order. These approximations always exist. In this way, we obtain two operators $\dasi(\hA)_{V'},\daso(\hA)_{V'}$ in $V'$ which, intuitively speaking, contain as much information about $\hA$ as is available from the more limited classical perspective $V'$. If we now ask for the value of $A$ in the given state from the perspective of $V'$, then we will get two real numbers, one from $\dasi(\hA)_{V'}$ and one from $\daso(\hA)_{V'}$. By the properties of the spectral order, these two numbers lie in the spectrum of $\hA$. We interpret them as the endpoints of a real interval, which is an `unsharp value' for the physical quantity $A$ from the perspective of $V'$. Note that we get `unsharp values' for each context $V\in\VN$ (and for some $V$, we may get sharp values, namely in an eigenstate-eigenvalue situation).

In a nutshell, this describes the idea behind \emph{daseinisation of self-adjoint operators}, which is a map from self-adjoint operators in the nonabelian von Neumann algebra $\N$ of physical quantities to arrows in the presheaf topos $\SetVNop$, sending $\hA\in\N_{\sa}$ to $\dasB{\hA}\in{\rm Hom}(\Sig,\Rlr)$. The arrow $\dasB{\hA}$ is the topos representative of the physical quantity $A$. We will now consider the construction of the quantity-value object $\Rlr$, the codomain of $\dasB{\hA}$.

As a first step, we formalise the idea of unsharp values as real intervals. Define
\begin{equation}			\label{Def_IR}
				\IR:=\{[a,b] \mid a,b\in\bbR,\;a\leq b\}.
\end{equation}
Note that we consider closed intervals and that the case $a=b$ is included, which means that the intervals of the form $[a,a]$ are contained in $\IR$. Clearly, these intervals can be identified with the real numbers. In this sense, $\bbR\subset\IR$.

It is useful to think of the presheaf $\Rlr$ as being given by one copy of $\IR$ for each classical perspective $V\in\VN$. Each observer hence has the whole collection of `unsharp' values available. The task is to fit all these copies of $\IR$ together into a presheaf. In particular, whenever we have $V,V'\in\VN$ such that $V'\subset V$, then we need a function from $\IR_V$ to $\IR_{V'}$ (here, we put an index on each copy of $\IR$ to show to which context the copy belongs). The simplest idea is to send each interval $[a,b]\in\IR_V$ to the same interval in $\IR_{V'}$. But, as we saw, in the topos approach the step from the larger context $V$ to the smaller context $V'$ is seen as a process of coarse-graining. Related to that, we expect to get an even more unsharp value, corresponding to a bigger interval, in $\IR_{V'}$ than in $\IR_V$ in general. In fact, we want to be flexible and define a presheaf such that we can map $[a,b]\in\IR_V$ either to the same interval in $\IR_{V'}$, or to any bigger interval $[c,d]\supset[a,b]$, depending on what is required.

We note that as we go from a larger context $V$ to smaller contexts $V'\subset V,\;V''\subset V', ...$, the left endpoints of the intervals will get smaller and smaller, and the right endpoints will get larger and larger in general. The idea is to formalise this by two functions $\mu_V,\nu_V:\da V\ra\bbR$ that give the left resp. right endpoints of the intervals for all $V'\subseteq V$. Here, $\da V:=\{V'\in\VN \mid V'\subseteq V\}$ denotes the downset of $V$ in $\VN$. Physically, $\da V$ is the collection of all subcontexts of $V$, that is, all smaller classical perspectives than $V$. This leads to the following definition.

\begin{defn}
The \emph{quantity-value object $\Rlr$} for quantum theory is given as follows:
\begin{itemize}
	\item To each $V\in\VN$, we assign the set
	\begin{align}			\nonumber
				\Rlr_V:=\{ &(\mu,\nu) \mid \mu,\nu:\da V\ra\bbR,\\
				&\mu\text{ order-preserving},\;\nu\text{ order-reversing},\;\mu\leq\nu\};
	\end{align}
	\item to each inclusion $i_{V'V}:V'\ra V$, we assign the map
	\begin{align}
				\Rlr(i_{V'V}):\Rlr_V &\longrightarrow \Rlr_{V'}\\			\nonumber
				(\mu,\nu) &\longmapsto (\mu|_{\downarrow V'},\nu|_{\downarrow V'}).
	\end{align}
\end{itemize}
\end{defn}
A \emph{global element $\ga$} of the presheaf $\Rlr$ is a choice of one pair of functions $\ga_V=(\mu_V,\nu_V)$ for every context $V\in\VN$ such that, whenever $V'\subset V$, one has $\ga_{V'}=(\mu_{V'},\nu_{V'})=(\mu_V|_{V'},\nu_V|_{V'})=\ga_V|_{V'}$. Clearly, a global element $\ga$ gives a pair of functions $\mu,\nu:\VN\ra\bbR$ such that $\mu$ is order-preserving (smaller contexts are assigned smaller real numbers) and $\nu$ is order-reversing (smaller contexts are assigned larger numbers). Note that $\mu$ and $\nu$ are defined on the whole poset $\VN$. Conversely, each such pair of functions determines a global element of $\Rlr$. Hence we can identify a global element $\ga$ with the corresponding pair of functions $(\mu,\nu)$. We see that $\ga=(\mu,\nu)$ provides one closed interval $[\mu(V),\nu(V)]$ for each context $V$. Moreover, whenever $V'\subset V$, we have $[\mu(V'),\nu(V')]\supseteq[\mu(V),\nu(V)]$, that is, the interval at $V'$ is larger than or equal to the interval at $V$. We regard a global element $\ga$ of $\Rlr$ as one unsharp value. Each interval $[\mu(V),\nu(V)]$, $V\in\VN$, is one component of such an unsharp value, associated with a classical perspective/context $V$. The set of global elements of $\Rlr$ is denoted as $\Ga\Rlr$.

In the following, we will show that the set $\Ga\Rlr$ of unsharp values for physical quantities that we obtain from the topos approach is a highly structured poset, and that a subset of $\Ga\Rlr$ naturally can be seen as a so-called \emph{domain} if the context category $\VN$ is a domain (see section \ref{Sec_RlrAsDomain}). Domains play an important role in theoretical computer science. The context category $\VN$ turns out to be a domain, even an algebraic domain, if $\N$ is a matrix algebra. This leads to a first, simple connection between noncommutative operator algebras, the topos approach and domain theory (see section \ref{Sec_V(N)AsDomain}). For more general von Neumann algebras $\N$, $\VN$ is a not a domain, see Section \ref{Sec_VNNoDomainInInfDim}.

\section{Domain theory}			\label{Sec_DomainTh}
\paragraph{Basics.} This section presents some basic concepts of domain theory. Standard references are \cite{GHK03,AJ94}. Since domain theory is not well-known among physicists, we give some definitions and motivation. Of course, we barely scratch the surface of this theory here.

The study of domains was initiated by Dana Scott \cite{Sco70,Sco72}, with the aim of finding a
denotational semantics for the untyped $\lambda$-calculus. Since then, it has undergone significant development and has become a mathematical theory in its own right, as well as an important tool in theoretical computer science.

Domain theory is a branch of order theory, yet with a strong topological flavour, as it captures notions of convergence and approximation. The basic concepts are easy to grasp: the idea is to regard a partially ordered set as a (qualitative) hierarchy of information content or knowledge. Under this interpretation, we think of $x \sm y$ as meaning `$y$ is more specific, carries more information than $x$'. Therefore, non-maximal elements in the poset represent incomplete/partial knowledge, while maximal elements represent complete knowledge. In a more computational perspective, we can see the non-maximal elements as intermediate results of a computation that proceeds towards calculating some maximal element (or at least a larger element with respect to the information order).

For the rest of this section, we will mainly be considering a single poset, which we will denote by $\pair{P,\sm}$.
\medskip
\paragraph{Convergence -- directed completeness of posets.} The first important concept in domain theory is that of \emph{convergence}. We start by considering some special subsets of $P$.

\begin{defn}
A nonempty subset $S \subseteq P$ is \emph{directed} if 
\begin{equation}
			\forall x,y \in S \exists z \in S: x, y \sm z.
\end{equation}
\end{defn}
A directed set can be seen as a consistent specification of information: the existence of a $z \lar x,y$ expresses that $x$ and $y$ are compatible, in the sense that it is possible to find an element which is larger, \ie contains more information, than both $x$ and $y$. Alternatively, from the computational viewpoint, directed sets describe computations that converge in the sense that for any pair of intermediate results (that can be reached in a finite number of steps), there exists a better joint approximation (that can also be reached in a finite number of steps). This is conceptually akin to converging sequences in a metric space. Hence the natural thing to ask of directed sets is that they possess a suitable kind of limit -- that is, an element containing all the information about the elements of the set, but not more information. This limit, if it exists, can be seen as the ideal result the computation approximates. This leads to the concept of directed-completeness:

\begin{defn}
A \emph{directed-complete poset} (or \emph{dcpo}) is a poset in which any directed set has a supremum (least upper bound). 
\end{defn}
Following \cite{AJ94}, we shall write $\dsup S$  to denote the supremum of a directed
set $S$, instead of simply $\osup S$. Note that $\dsup S$ means `S is a directed set, and we are considering its supremum'.


The definition of morphisms between dcpos is the evident one:
\begin{defn}
A function $\fdec{f}{P}{P'}$ between dcpos $\pair{P, \sm}$ and $\pair{P',\sm'}$ is
\emph{Scott-continuous} if
\begin{itemize}
\item $f$ is order-preserving (monotone);
\item for any directed set $S \subseteq P$, $f(\dsup S) = \dsup f^\ra(S)$, where $f^\ra(S)=\{f(s) \mid s\in S\}$.
\end{itemize}
\end{defn}
Clearly, dcpos with Scott-continuous functions form a category. The definition of a Scott-continuous function can be extended to posets which are not dcpos, by carefully modifying the second condition to say `for any directed set $S$ that has a supremum'. The reference to `continuity' is not fortuitous, as there is the so-called \emph{Scott topology}, with respect to which these (and only these) arrows are continuous. The Scott topology will be defined below.
\medskip
\paragraph{Approximation - continuous posets.} The other central notion is sometimes called approximation. This is captured by the following relation on elements of $P$.
\begin{defn}
We say that \emph{$x$ approximates $y$} or \emph{$x$ is way below $y$}, and write
$x \waybelow y$, whenever for any directed set $S$ with a supremum, we have
\begin{equation}
			y \sm \dsup S \implies \exists s \in S: x \sm s.
\end{equation}
\end{defn}
The `way-below' relation captures the fact that $x$ is much simpler than $y$, yet carries essential information about $y$. In the computation analogy, we could say that $x$ is an unavoidable step in any computation of $y$, in the sense that any computation that tends to (i.e., successively approximates) $y$ must reach or pass $x$ in a finite amount of steps.

In particular, one can identify certain elements which are `finite' or `simple', in the sense that they cannot be described by (\ie given as the supremum of) any set of smaller elements that does not contain the element itself already. 
\begin{defn}
An element $x \in P$ such that $x \waybelow x$ is called a
\emph{compact} or a \emph{finite} element.
$K(P)$ stands for the set of compact elements of $P$.
\end{defn}
Another interpretation of a compact element $x$ is to say
that any computation that tends to $x$ eventually reaches $x$ in a finite number of steps.

In a poset $\pair{\ps A, \subseteq}$ of subsets of a set $A$, the compact elements are exactly the finite subsets of $A$:
if one covers a finite set $F$ by a directed collection $(S_i)_{i\in I}$, $F$ will be contained in one of the $S_i$
already.
Also, the definition of the way-below relation (particularly of $x \waybelow x$) has a striking similarity with that of a compact set in topology. Indeed, in the poset $\pair{\O(X),\subseteq}$ of open subsets of a topological space $X$, the compact elements
are simply the compact open sets.

Given an element $x$ in a poset $P$, we write $\down x$ for the \emph{downset} $\{y\in P \mid y\leq x\}$ of $x$ in $P$. If $X\subseteq P$, then $\down X:=\{y\in P \mid \exists x\in X:y\leq x\}$. The sets $\up x$ and $\up X$ are defined analogously. Similarly, we write $\ddown x, \ddown X, \uup x, \uup X$ for the corresponding sets with respect to the way-below relation $\waybelow$, e.g.
\begin{equation}
			\ddown x \;\;\;:=\;\;\; \setdef{y\in P}{y\waybelow x}.
\end{equation}

We now come to another requirement that is usually imposed on the posets of interest, besides directed-completeness.
\begin{defn}
A poset $P$ is a \emph{continuous poset} if, for any $y \in P$, one has $\dsup\ddown y = y$, and $P$ is an \emph{algebraic poset} if, for any $y \in P$, $\dsup(\ddown y \cap K(P)) = y$ holds.
\end{defn}
Recall that $\dsup\ddown y = y$ means that $\ddown y$ is directed and has supremum $y$. 
Continuity basically says that the elements `much simpler' than $y$ carry all the  
information about $y$, when taken together. Algebraicity further says that the `primitive' (\ie compact) elements are enough.

\paragraph{Bases and domains.} The continuity and algebraicity requirements are often expressed in terms of the notion of a basis. The definition is slightly more involved, but the concept of a basis is useful in its own right.
\begin{defn}
A subset $B \subseteq P$ is a \emph{basis} for $P$ 
if, for all $x \in X$, $\dsup(B \cap \ddown x) = x$. 
\end{defn}
It is immediate that continuity implies
that $P$ itself is a basis. Conversely, the existence of a basis implies continuity.
\begin{defn}
A \emph{domain} (or \emph{continuous domain}) $\pair{D,\sm}$ is a dcpo which is continuous. Equivalently, a domain is a dcpo that has a basis. $\pair{D,\sm}$ is an \emph{algebraic domain} if it is a domain and algebraic, that is, if the set $K(D)$ of compact elements is a basis for $\pair{D,\sm}$. An \emph{$\omega$-continuous} (resp. \emph{$\omega$-algebraic}) domain is a continuous (resp. algebraic) domain with a countable basis.
\end{defn}
Note that a domain always captures the notions of convergence and of approximation as explained above.

\medskip
\paragraph{Bounded complete posets.} Later on, we will need another completeness property that a poset $P$ may or may not have.
\begin{defn}
  A poset is \emph{bounded complete} (or a \emph{bc-poset}) if all subsets $S$ with an upper bound have a supremum.
  It is \emph{finitely bounded complete} if all finite subsets with an upper bound have a supremum.
  It is \emph{almost (finitely) bounded complete} if all nonempty (finite) subsets with an upper bound have a supremum.\footnote{Note that the `almost' versions don't require a least element $\bot$.}
\end{defn}
We state the following result without proof.
\begin{prop}			\label{Prop_FinBC-dcpoBoundedComplete}
A(n almost) finitely bounded complete dcpo is (almost) bounded complete.\footnote{Bounded-complete dcpos are the same as complete semilattices (see \cite{GHK03}).}
\end{prop}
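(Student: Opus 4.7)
The plan is to reduce an arbitrary bounded subset to a directed collection of finite suprema and then invoke the dcpo property. So let $S \subseteq P$ be any subset admitting an upper bound $u$, and consider the family
\begin{equation*}
  F \;=\; \{\,\sup F_0 \mid F_0 \subseteq S \text{ finite}\,\}
\end{equation*}
of suprema of finite subsets of $S$. I first would check that $F$ is well-defined: any finite $F_0 \subseteq S$ is bounded above by $u$, so by the finite bounded completeness hypothesis $\sup F_0$ exists in $P$. In the ``almost'' variant we restrict to nonempty finite $F_0$; if in addition we assume $S \neq \emptyset$, then any singleton $\{s\} \subseteq S$ witnesses that $F$ is nonempty.

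The next step is to verify that $F$ is directed. Given two elements $\sup F_0, \sup F_1 \in F$, the union $F_0 \cup F_1$ is again a finite subset of $S$ bounded by $u$, so $\sup(F_0 \cup F_1) \in F$ exists and dominates both $\sup F_0$ and $\sup F_1$. Since $P$ is a dcpo, the directed set $F$ has a supremum $\dsup F$, which I would then identify with $\sup S$: for each $s \in S$, the singleton $\{s\}$ is a finite bounded subset with $\sup \{s\} = s \in F$, so $s \sm \dsup F$, showing that $\dsup F$ is an upper bound of $S$; conversely, any upper bound $v$ of $S$ bounds every finite $F_0 \subseteq S$, hence bounds every $\sup F_0$, and therefore $\dsup F \sm v$. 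This establishes bounded completeness in the almost case.

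For the non-almost case, it remains only to handle $S = \emptyset$: the empty set is a finite subset with every element of $P$ as upper bound, so finite bounded completeness already supplies $\bot := \sup \emptyset \in P$, which is then the supremum of $S = \emptyset$, and the previous argument (now with $F$ automatically nonempty since $\bot \in F$) covers every nonempty $S$. The main obstacle, if any, is purely bookkeeping around the ``almost'' variant, namely ensuring that $F$ is nonempty so that it qualifies as a directed set in the sense of the preceding section; this is why we carry $S \neq \emptyset$ through the argument and treat $S = \emptyset$ separately only in the non-almost case.
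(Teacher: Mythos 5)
Your proof is correct; the paper states this proposition without proof, and your argument is the standard one: close the bounded set $S$ under suprema of finite subsets to obtain a directed set, take its directed supremum, and check it is the least upper bound of $S$. The bookkeeping for the ``almost'' variant (keeping $F$ nonempty via singletons, and handling $S=\emptyset$ via $\sup\emptyset=\bot$ only in the non-almost case) is exactly the right care to take, since the paper's definition of directedness requires nonemptiness.
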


Another property, stronger than bounded completeness, will be needed later on:
\begin{defn}
  An \emph{$L$-domain} is a domain $D$ in which, for each $x \in D$, the principal ideal $\down x$ is a complete lattice.
\end{defn}
\medskip
\paragraph{The Scott topology.} We now define the appropriate topology on dcpos and domains, called the Scott toplogy, and present some useful results. In fact, the Scott topology can be defined on any poset, but we are mostly interested in dcpos and domains.
\begin{defn}
Let $\pair{P,\leq}$ be a poset. A subset $G$ of $P$ is said to be \emph{Scott-open} if
\begin{itemize}
	\item $G$ is an upper set, that is 	
	\begin{equation}
				x \in G \land x \leq y \implies y \in G;
	\end{equation}
\item $G$ is inaccessible by directed suprema, i.e. for any directed set $S$ with a supremum,
	\begin{equation}
		\dsup S \in G \implies \Exists{s \in S} s \in U.
	\end{equation}
\end{itemize}
The complement of a Scott-open set is called a \emph{Scott-closed} set. Concretely, this is a lower set closed for all existing directed suprema.
\end{defn}
The name Scott-open is justified by the following result.
\begin{prop}\label{Scott-top}
The Scott-open subsets of $P$ are the opens of a topology on $P$, called the \emph{Scott topology}.
\end{prop}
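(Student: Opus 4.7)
The plan is to verify the three axioms of a topology directly: the empty set and the whole space $P$ are Scott-open, arbitrary unions of Scott-open sets are Scott-open, and binary (hence finite) intersections of Scott-open sets are Scott-open. Each axiom will be checked by verifying both clauses in the definition of Scott-openness (being an upper set, and being inaccessible by directed suprema).

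First I would dispose of the trivial cases. The empty set satisfies both conditions vacuously; the whole poset $P$ is obviously an upper set, and if $\dsup S \in P$ for some directed $S$, then any $s \in S$ lies in $P$ (using only that directed sets are nonempty by definition). For an arbitrary union $G = \bigcup_{i \in I} G_i$ of Scott-opens, the upper-set property is preserved under unions, and if $\dsup S \in G$, then $\dsup S \in G_i$ for some $i \in I$, so by Scott-openness of $G_i$ there exists $s \in S$ with $s \in G_i \subseteq G$.

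The only substantive step is closure under binary intersection, and this is where I expect the main (though still mild) obstacle to arise. Let $G_1, G_2$ be Scott-open and set $G = G_1 \cap G_2$. Being an upper set is immediate from intersecting two upper sets. For inaccessibility, suppose $\dsup S \in G$ for a directed $S$ with a supremum. Then $\dsup S$ lies in each of $G_1, G_2$, so Scott-openness of each yields elements $s_1, s_2 \in S$ with $s_1 \in G_1$ and $s_2 \in G_2$. Here one must combine two separate witnesses into one common witness lying in both $G_1$ and $G_2$: this is precisely where directedness of $S$ is used, providing some $s \in S$ with $s_1, s_2 \sm s$, and then the upper-set property of $G_1$ and $G_2$ forces $s \in G_1 \cap G_2 = G$.

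The proof thus reduces to a clean interplay between the two defining conditions of Scott-openness: upper-closure is what allows two witnesses from a directed set to be merged into one, while directedness supplies the common bound. No further properties of $P$ (such as directed-completeness) are needed, confirming that the Scott topology is well defined on any poset, as claimed in the preamble to the proposition.
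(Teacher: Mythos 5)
Your proof is correct and is precisely the standard argument: the paper itself omits the proof, referring to \cite{AJ94}, and the argument given there is exactly this verification of the topology axioms, with the only non-trivial point being the binary-intersection case where directedness merges the two witnesses and upper-closure places the common bound in both opens. You also correctly observe that no completeness assumption on $P$ is needed, consistent with the paper's remark that the Scott topology can be defined on any poset.
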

\begin{prop}\label{Scott-basis}
If $P$ is a continuous poset, the collection
\begin{equation}
			\setdef{\uup x}{x \in P}
\end{equation}
is a basis for the Scott topology.
\end{prop}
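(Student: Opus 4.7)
The plan has three steps. The linchpin is a classical interpolation lemma for continuous posets: if $x \waybelow y$, then there exists $z \in P$ with $x \waybelow z \waybelow y$. I would establish this first, since both halves of the proposition rest on it. The standard argument considers the set $T := \setdef{u \in P}{\exists v \in P : u \waybelow v \waybelow y}$, uses continuity together with monotonicity of the way-below relation in both arguments to show $T$ is directed, and uses continuity iterated twice to show $\dsup T = y$. The hypothesis $x \waybelow y = \dsup T$ then immediately yields some $t \in T$ with $x \leq t$, and unpacking the definition of $T$ produces the required interpolant.

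With interpolation in hand, I would verify that each $\uup x$ is Scott-open. Upper-closure follows because the way-below relation is monotone on the right: if $x \waybelow y$ and $y \leq z$, then for any directed $S$ with $z \leq \dsup S$ we have $y \leq \dsup S$, hence some $s \in S$ with $x \leq s$, so $x \waybelow z$. For inaccessibility by directed suprema, suppose $x \waybelow \dsup S$; interpolation yields $z$ with $x \waybelow z \waybelow \dsup S$, the way-below condition on $z$ provides some $s \in S$ with $z \leq s$, and monotonicity delivers $x \waybelow s$, so $s \in \uup x$.

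For the basis property, fix a Scott-open $U$ and a point $y \in U$. By continuity, $\ddown y$ is directed with $\dsup \ddown y = y \in U$, so Scott-openness of $U$ yields some $x \in \ddown y$ with $x \in U$; then $x \waybelow y$ witnesses $y \in \uup x$. Moreover $\uup x \subseteq U$, because any $z \in \uup x$ satisfies $x \leq z$ (apply the way-below condition to the trivial directed set $\{z\}$, whose supremum is $z$) and $U$ is an upper set containing $x$. Thus every Scott-open set is a union of sets of the form $\uup x$.

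The principal obstacle is the interpolation lemma. This is precisely where the full strength of continuity is brought to bear — without it, the inaccessibility half of the Scott-openness of $\uup x$ fails — and its proof, while standard, requires a slightly delicate double application of the continuity hypothesis to produce both the directedness and the supremum calculation for the auxiliary set $T$. Everything else in the proposition is a direct unpacking of the definitions of Scott-open set, directed set, and way-below relation.
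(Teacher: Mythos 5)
Your proof is correct and is exactly the standard argument (interpolation lemma, Scott-openness of each $\uup x$, then using $\dsup\ddown y = y$ to shrink any Scott-open neighbourhood of $y$ to some $\uup x$); the paper itself omits the proof and simply cites \cite[\textsection 1.2.3]{AJ94}, where this is the argument given. No gaps.
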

The Scott topology encodes a lot of information about the domain-theoretical properties of $P$ relating to convergence and (in the case of a continuous poset) approximation. The following is one of its most important properties, relating the algebraic and topological aspects of domain theory.

\begin{prop}\label{Scott-cts-is-cts}
Let $P$ and $Q$ be two posets. A function $\fdec{f}{P}{Q}$ is Scott-continuous if and only if it is (topologically) continuous with respect to the Scott topologies on $P$ and $Q$.
\end{prop}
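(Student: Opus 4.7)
The plan is to prove each direction by translating between the order-theoretic and topological formulations, with one preliminary observation doing most of the work. The observation is that for any $y \in Q$, the principal down-set $\down y$ is Scott-closed: it is visibly a lower set, and if $T \subseteq \down y$ is directed and has a supremum $t$ in $Q$, then $y$ is an upper bound for $T$, whence $t \sm y$, i.e., $t \in \down y$.

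For the forward direction, I assume $f$ is Scott-continuous, and given a Scott-open $U \subseteq Q$, I verify that $f^{-1}(U)$ is Scott-open. First, it is an upper set: if $x \in f^{-1}(U)$ and $x \sm y$, then monotonicity of $f$ yields $f(x) \sm f(y)$, and the upper-set property of $U$ gives $f(y) \in U$. Second, it is inaccessible by directed suprema: if $S \subseteq P$ is directed with $\dsup S \in f^{-1}(U)$, then $f^{\ra}(S)$ is directed (by monotonicity), and Scott-continuity gives $\dsup f^{\ra}(S) = f(\dsup S) \in U$, so Scott-openness of $U$ supplies some $s \in S$ with $f(s) \in U$, i.e., $s \in S \cap f^{-1}(U)$.

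For the converse, I assume $f$ is topologically continuous and extract the two ingredients of Scott-continuity. Monotonicity comes first: given $x \sm y$, the preliminary observation shows that $\down f(y)$ is Scott-closed in $Q$, so $f^{-1}(\down f(y))$ is Scott-closed, hence a lower set, in $P$; it contains $y$, so it contains $x$, meaning $f(x) \sm f(y)$. To see that $f$ preserves existing directed suprema, let $S \subseteq P$ be directed with supremum $s$. Monotonicity makes $f^{\ra}(S)$ directed and $f(s)$ an upper bound for it. For any other upper bound $u$ of $f^{\ra}(S)$, the set $\down u$ is Scott-closed in $Q$, so $f^{-1}(\down u)$ is Scott-closed in $P$; this set contains $S$ and is closed under directed suprema, hence contains $s$, giving $f(s) \sm u$. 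Thus $f(s) = \dsup f^{\ra}(S)$.

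The point deserving most care is the converse: it is tempting to postulate monotonicity as a separate hypothesis, but topological continuity actually forces it via the principal-ideal trick, and the very same device then delivers preservation of directed suprema without any further assumption on $P$ or $Q$. The entire argument runs at the level of the definitions of Scott-open and Scott-closed sets, without appealing to the way-below relation or to the basis description from Proposition~\ref{Scott-basis}.
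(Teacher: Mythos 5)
Your proof is correct and is the standard argument: the paper itself omits the proof and refers to \cite{AJ94}, where precisely this reasoning appears, with the key step being that principal ideals $\down y$ are Scott-closed, which yields both monotonicity and preservation of existing directed suprema in the converse direction. You also correctly work with the extended definition of Scott-continuity for arbitrary posets (only suprema that exist), so nothing further is needed.
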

The results above (and proofs for the more involved ones) can be found in \cite[\textsection 1.2.3]{AJ94}. More advanced results can be found in \cite[\textsection 4.2.3]{AJ94}.

As for separation properties, the Scott topology satisfies only a very weak axiom in all interesting cases.
\begin{prop}\label{scott-T0}
The Scott topology on $P$ gives a $T0$ topological space. It is $T2$ if and only if the order in $P$ is trivial.
\end{prop}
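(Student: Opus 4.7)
The plan is to treat the two statements separately, working directly from the definitions of Scott-open and upper set. The work divides naturally into (i) always $T_0$, (ii) $T_2$ forces the order to be trivial, and (iii) trivial order yields $T_2$.

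For the $T_0$ claim, I would first observe that for any $y \in P$ the principal downset $\down y = \{z \in P \mid z \leq y\}$ is Scott-closed: it is clearly a lower set, and if $S \subseteq \down y$ is directed with a supremum, then $y$ is an upper bound of $S$, so $\dsup S \leq y$, \ie $\dsup S \in \down y$. Now given distinct $x, y \in P$, at least one of $x \not\leq y$ or $y \not\leq x$ holds; assuming the former, the Scott-open set $P \setminus \down y$ contains $x$ but not $y$, yielding $T_0$ separation.

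For the second statement, the reverse direction (trivial order $\Rightarrow$ $T_2$) is the easy half: when $\leq$ is the identity, a set is directed if and only if it is a singleton, so every subset of $P$ is vacuously inaccessible by directed suprema, and every subset is an upper set. Hence the Scott topology is discrete and trivially Hausdorff. For the forward direction, I would argue contrapositively: suppose the order is nontrivial and pick $x, y \in P$ with $x < y$ (strictly). Any Scott-open neighbourhood $U$ of $x$ is by definition an upper set, so $x \in U$ and $x \leq y$ force $y \in U$. Therefore every open set containing $x$ also contains $y$, ruling out the existence of disjoint open neighbourhoods and hence $T_2$.

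I do not foresee a genuine obstacle; the only subtlety is the asymmetry in the $T_0$ argument (we need $x \not\leq y$ \emph{or} $y \not\leq x$, and must pick the right one to form the separating open set), together with keeping straight that Scott-openness requires both being an upper set \emph{and} inaccessibility by directed suprema, so that the claim that $\down y$ is Scott-closed really does use both conditions on its complement.
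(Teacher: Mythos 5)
Your proof is correct. The paper itself does not prove this proposition --- it defers to the standard reference \cite{AJ94} --- and your argument (separating points via the Scott-closed principal downsets $\down y$ for $T_0$, and using the upper-set condition to show that a strict inequality $x < y$ obstructs Hausdorff separation, with the discrete topology arising from a trivial order for the converse) is exactly the standard one.
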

\medskip
\paragraph{The real interval domain.} As we saw in section \ref{Sec_ToposApproach}, the collection of real intervals can serve as a model for `unsharp values' of physical quantities (at least if we consider only one classical perspective $V$ on a quantum system). We will now see that the set $\IR$ of closed real intervals defined in equation \eq{Def_IR} actually is a domain, the so-called \emph{interval domain}. This domain was introduced by Scott \cite{Sco72b} as a computational model for the real numbers. 

\begin{defn}
The \emph{interval domain} is the poset of closed intervals in $\R$ (partially) ordered by reverse inclusion,
\begin{equation}
			\IR := \pair{\{ [a,b] \mid a, b  \in \R\} , \sm := \supseteq}.
\end{equation}
\end{defn}
The intervals are interpreted as approximations to real numbers, hence the ordering
by reverse inclusion: we think of $x \sm y$ as `$y$ is sharper than $x$'. Clearly, the maximal elements are the real numbers themselves (or rather, more precisely, the intervals of the form $[a,a]$). 

We shall denote by $x_-$ and $x_+$  the left and right endpoints of an interval $x \in \IR$.
That is, we write $x = [x_-,x_+]$. Also, if $\fdec{f}{X}{\IR}$ is a function to the interval
domain, we define the functions
$\fdec{f_-, f_+}{X}{\R}$ given by $f_\pm(x) := (f(x))_\pm$,
so that, for any $x \in X$, $f(x) = [f_-(x),f_+(x)]$.
Clearly, one always has $f_- \leq f_+$ (the order on
functions being defined pointwise).
Conversely, any two functions $\fdec{g,h}{X}{\R}$ with $g \leq h$
determine a function $f$ such that $f_-=g$ and $f_+=h$.

Writing $x = [x_-,x_+]$ amounts to regarding $\IR$ as being embedded in $\R \times \R$ (as a set).
The decomposition for functions can then be depicted as follows:
\begin{equation}
  \xymatrix{
  & & & \R &&\\
  X \ar^{\;\;f}[rr] \ar^{f_-}[rrru] \ar_{f_+}[rrrd]  & & \IR \; \ar@{>->}[r] & \R \times \R \ar^{\pi_1}[d] \ar_{\pi_2}[u]& & (f_- \leq f_+)\\ 
  & & & \R &&\\
  }
  \label{diag}
\end{equation}
and it is nothing more than the universal property of the (categorical) product $\R \times \R$ restricted to $\IR$, the restriction being reflected
in the condition $f_- \leq f_+$.

This diagram is more useful in understanding $\IR$ than it may seem at first sight.
Note that we can place this diagram in the category $\Pos$ (of posets and monotone maps)
if we make a judicious choice
of order in $\R^2$. This is achieved by equipping the first copy of $\R$ with its usual order 
and the second copy with the opposite order $\geq$.
Adopting this view, equations \ref{eq:waybelowIR} and \ref{eq:dsupIR} below
should become apparent.
\medskip
\paragraph{Domain-theoretic structure on $\IR$.} The way-below relation in $\IR$ is given by:
\begin{equation}
  		x \waybelow y \;\;\;\text{iff}\;\;\; (x_- < y_-) \land (y_+ < x_-).			\label{eq:waybelowIR}
\end{equation}
Suprema of directed sets exist and are given by intersection. This can be written directly
as an interval: let $S$ be a directed set, then
\begin{equation}
  		\dsup S \;=\; \bigcap S \;=\; [\sup\setdef{x_-}{x\in S}, \inf\setdef{x_+}{x\in S}].			\label{eq:dsupIR}
\end{equation}
One observes easily that $\IR$ is an $\omega$-continuous dcpo and hence a domain. (To show $\omega$-continuity, one can consider the basis $\IQ$ for $\IR$.)

Moreover, $\IR$ is a meet-semilattice. Also, we observe that $\IR$ is an almost-bounded-complete poset: if $S\subset\IR$ is a non-empty subset with an upper bound (which just means that all intervals in $S$ overlap), then $S$ has a supremum,  clearly given by the intersection of the intervals. The related poset $\IR_{\bot}$ (where we add a least element, which can be interpreted as $\bot = \R = [-\infty,+\infty]$) is then bounded complete. Also, it is easy to see that $\IR_\bot$ is an $L$-domain.

We now consider the Scott topology on $\IR$. The basic open sets of the Scott topology on $\IR$ are of the form
\begin{equation}
			\uup [a,b] = \setdef{[c,d]}{a < c \leq d < b}
\end{equation}
for each $[a,b] \in \IR$. The identity
\begin{equation}
			\uup [a,b] = \setdef{t \in \IR}{t \subseteq (a,b)}
\end{equation}
allows us to see $\uup[a,b]$ as a kind of open interval $(a,b)$. More precisely, it consists of closed intervals contained in the real interval $(a,b)$.

Recall that we can see the poset $\IR$ as sitting inside $\R^\leq \times \R^\geq$.
In topological terms, this means that the Scott topology is inherited from the Scott
topologies in $\R^\leq$ and $\R^\geq$. These are simply the usual lower and upper
semicontinuity topologies on $\R$, with basic
open sets respectively $(a,\infty)$ and $(-\infty,b)$.
Interpreting diagram \ref{diag} in $\Top$ gives the following result.
\begin{prop}\label{res:scott-lowerupper} Let $X$ be a topological space and
$\fdec{f}{X}{\IR}$ be a function.
Then $f$ is continuous
iff $f_-$ is lower semicontinuous and $f_+$ is upper semicontinuous.
\end{prop}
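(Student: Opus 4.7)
My plan is to prove the proposition directly by unpacking the definition of continuity in terms of the basis for the Scott topology on $\IR$ given by Proposition \ref{Scott-basis}. The key observation is that, for any $[a,b] \in \IR$, membership of $f(x)$ in the basic Scott-open $\uup[a,b] = \{t \in \IR \mid t \subseteq (a,b)\}$ is characterised componentwise:
\begin{equation}
  f(x) \in \uup[a,b] \;\;\iff\;\; a < f_-(x) \;\text{ and }\; f_+(x) < b.
\end{equation}
Hence
\begin{equation}
  f^{-1}(\uup[a,b]) \;=\; f_-^{-1}\bigl((a,\infty)\bigr) \,\cap\, f_+^{-1}\bigl((-\infty,b)\bigr).
\end{equation}
Recall that lower semicontinuity of $f_-$ means that $f_-^{-1}((a,\infty))$ is open in $X$ for all $a \in \R$, and similarly $f_+$ is upper semicontinuous iff $f_+^{-1}((-\infty,b))$ is open in $X$ for all $b \in \R$.

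For the ``if'' direction I would argue: assuming $f_-$ is lower semicontinuous and $f_+$ is upper semicontinuous, each $f^{-1}(\uup[a,b])$ is an intersection of two open sets, hence open; since the $\uup[a,b]$ form a basis for the Scott topology on $\IR$, $f$ is continuous.

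For the ``only if'' direction, assume $f$ is continuous. Given $a \in \R$, I would write
\begin{equation}
  f_-^{-1}\bigl((a,\infty)\bigr) \;=\; \bigcup_{b \in \R} f^{-1}(\uup[a,b]),
\end{equation}
noting that if $f_-(x) > a$ then picking any $b > f_+(x)$ puts $f(x)$ inside $\uup[a,b]$, while conversely any $x$ in the right-hand side automatically satisfies $f_-(x) > a$. Each set in the union is open by continuity of $f$, so $f_-^{-1}((a,\infty))$ is open and $f_-$ is lower semicontinuous. A symmetric argument, writing $f_+^{-1}((-\infty,b)) = \bigcup_{a \in \R} f^{-1}(\uup[a,b])$, shows $f_+$ is upper semicontinuous.

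Conceptually, the whole argument is simply the observation, foreshadowed by diagram \eqref{diag}, that the Scott topology on $\IR$ coincides with the subspace topology it inherits from $\R^{\leq} \times \R^{\geq}$ (the product of $\R$ with the lower- and upper-semicontinuity topologies), together with the universal property of the product. I see no genuine obstacle here: the only thing that needs care is checking that the componentwise characterisation of $\uup[a,b]$ together with the fact that these sets form a basis (Proposition \ref{Scott-basis}) suffice to avoid having to consider arbitrary Scott-opens, which would complicate the ``only if'' direction unnecessarily.
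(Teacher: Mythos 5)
Your proof is correct and is essentially the argument the paper has in mind: the paper simply says ``interpreting diagram \eqref{diag} in $\Tp$ gives the result'', i.e.\ the Scott topology on $\IR$ is the subspace topology from $\R^{LSC}\times\R^{USC}$ plus the universal property of the product, and your basis computation with the sets $\uup[a,b]$ is precisely the verification of that identification, spelled out. No gaps; you have merely supplied the details the paper leaves implicit.
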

Hence, the subspace topology on $\IR$ inherited from $\R^{LSC} \times \R^{USC}$, where $LSC$ and $USC$ stand for the lower and upper semicontinuity topologies, is the Scott topology.
\medskip
\paragraph{Generalising $\R$.} We want to regard $\IR$ as a generalisation of $\R$. Note that the
set  $\max\IR$ consists of degenerate intervals $[x,x]=\{x\}$. This gives an obvious way of embedding the usual continuum $\R$ in $\IR$. What is more interesting, this is actually a homeomorphism.
\begin{prop}\label{res:recoverR}
$\R \cong \max \IR$ as topological spaces, where $\R$ is equipped with its usual topology and $\max\IR$ with the subspace topology inherited from $\IR$.
\end{prop}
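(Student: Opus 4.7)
The plan is to exhibit an explicit bijection $\fdec{\phi}{\R}{\max\IR}$, given by $\phi(x) := [x,x]$, and show that under the subspace topology inherited from the Scott topology on $\IR$, this map becomes a homeomorphism. The bijectivity is immediate: the maximal elements of $\IR$ are precisely the degenerate intervals $[x,x]$ (since $[a,b] \sm [c,d]$ means $[c,d] \subseteq [a,b]$, and no strict subinterval of $[x,x]$ exists), and the assignment $x \mapsto [x,x]$ is clearly injective and onto this set.

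For the topological part, I would start from the basis for the Scott topology on $\IR$ provided by Proposition \ref{Scott-basis}, namely the sets
\begin{equation}
  \uup[a,b] \;=\; \setdef{[c,d] \in \IR}{a < c \leq d < b}.
\end{equation}
Intersecting with $\max\IR$ yields
\begin{equation}
  \uup[a,b] \cap \max\IR \;=\; \setdef{[x,x]}{a < x < b},
\end{equation}
which under $\phi^{-1}$ corresponds exactly to the open interval $(a,b) \subseteq \R$. As $a,b$ range over $\R$ with $a \leq b$, these sets form a basis for the usual topology on $\R$. Therefore $\phi$ pulls back a basis of the subspace topology on $\max\IR$ to a basis of the Euclidean topology, so $\phi$ is continuous and open, hence a homeomorphism.

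There is no real obstacle here; the only minor care is ensuring the basis for the subspace topology is obtained by intersecting a basis of the ambient space, which is a standard fact. One might also note, as a sanity check consistent with Proposition \ref{res:scott-lowerupper}, that $\phi$ corresponds to the diagonal map $x \mapsto (x,x)$ into $\R^{LSC} \times \R^{USC}$; its image lies in the diagonal $\{(x,x) : x \in \R\}$, and the subspace topology induced on the diagonal from $\R^{LSC} \times \R^{USC}$ is easily seen to be the usual topology on $\R$, since every Euclidean-open $(a,b)$ arises as $(a,\infty) \cap (-\infty,b)$ restricted to the diagonal. This alternative viewpoint confirms the conclusion.
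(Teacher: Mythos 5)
Your proposal is correct and follows essentially the same route as the paper: identifying the basic Scott-open sets $\uup[a,b]\cap\max\IR$ with the open intervals $(a,b)$, supplemented by the diagonal-map viewpoint into $\R^{LSC}\times\R^{USC}$, both of which are exactly the two arguments the paper gives. No gaps.
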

This result is clear from the following observation that identifies basic open sets:
\[\uup[a,b] \cap \max\IR = \setdef{t \in \IR}{t \subseteq (a,b)} \cap \max\IR
= \setdef{\{t\}}{t \in (a,b)} = (a,b)\]

Another (maybe more informative) way of seeing this is by thinking of $\IR$ as sitting inside $\R^{LSC} \times \R^{USC}$. The well-known fact that the topology of $\R$ is given as the join of the two semicontinuity topologies can be stated as follows: the diagonal map $\fdec{diag}{\R}{\R^{LSC}\times \R^{USC}}$ gives an isomorphim between $\R$ and its image $\Delta$. This is because basic opens in $\Delta$ are
\begin{equation}
		  \Delta \cap \left( (a,\infty) \times (-\infty,b) \right) = (a,\infty) \cap (-\infty,b) = (a,b).
\end{equation}
The result above just says that this diagonal map factors through $\IR$, with image $\max\IR$.

The only separation axiom satisfied by the topology of $\IR$ is the T0 axiom, whereas $\R$ is a T6 space.
However, this topology still keeps some properties of the topology on $\R$. It is second-countable and locally compact  (the general definition of local compactness for non-Hausdorff spaces is given in \cite{Wil70}). Note that, obviously, adding a least element $\bot = [-\infty,+\infty]$
to the domain would make it compact.

\section{The Quantity-Value Object and Domain Theory}			\label{Sec_RlrAsDomain}
We now study the presheaf $\Rlr$ of (generalised) values that shows up in the topos approach in the light of domain theory. First of all, one can consider each component $\Rlr_V$ individually, which is a set. More importantly, we are interested in the set $\Ga\Rlr$ of global elements of $\Rlr$. We will relate these two sets with the interval domain $\IR$ introduced in the previous section.

The authors of \cite{DI(2)}, where the presheaf $\Rlr$ was first introduced as the quantity-value object for quantum theory, were unaware of domain theory at the time. The idea that $\Rlr$ (or rather, a closely related co-presheaf) is related to the interval domain was first presented by Landsman et al. in \cite{HLS09}. These authors considered $\Rlr$ as a topos-internal version of the interval domain. Here, we will focus on topos-external arguments.
\medskip
\paragraph{Rewriting the definition of $\Rlr$.} We start off by slightly rewriting the definition of the presheaf $\Rlr$.

Recall from the previous section that a function $\fdec{f}{X}{\IR}$ to the interval domain can be decomposed into two functions $\fdec{f_-, f_+}{X}{\R}$ giving the left and right endpoints of intervals. In case $X$ is a poset, it is immediate from the definitions that such an $f$ is order-preserving if and only if $f_-$ is order-preserving and $f_+$ is order-reversing with respect to the usual order on the real numbers. This allows us to rewrite the definition of $\Rlr$:
\begin{defn}
The quantity-value presheaf $\Rlr$ is given as follows:
\begin{itemize}
	\item To each $V \in \VN$, we assign the set
	\begin{equation}
				\Rlr_V \;\;\; = \;\;\; \setdef{\fdec{f}{\down V}{\IR}}{\text{$f$ order-preserving}};
	\end{equation}
	\item to each inclusion $i_{V'V}$, we assign the function
	\begin{align*}
		\Rlr(i_{V'V}) : \Rlr_V &\longrightarrow \Rlr_{V'}\\
		f &\longmapsto f|_{\down V'}
	\end{align*}
\end{itemize}
\end{defn}
This formulation of $\Rlr$ brings it closer to the interval domain.
\medskip
\paragraph{Global elements of $\Rlr$.} In section \ref{Sec_ToposApproach}, we stated that in the topos approach, the generalised values of physical quantities are given by global elements of the presheaf $\Rlr$. We remark that the global elements of a presheaf are (analogous to) points if the presheaf is regarded as a generalised set. Yet, the set of global elements may not contain the full information about the presheaf. There are non-trivial presheaves that have no global elements at all -- the spectral presheaf $\Sig$ is an example. In contrast, $\Rlr$ has many global elements.

We give a slightly modified characterisation of the global elements of $\Rlr$ (compare end of section \ref{Sec_ToposApproach}):
\begin{prop}\label{Rlr-globalsections}
Global elements of $\Rlr$ are in bijective correspondence with order-preserving functions from $\VN$ to $\IR$.
\end{prop}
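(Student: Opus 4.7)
The plan is to exhibit the bijection explicitly, in both directions, and verify that the two assignments are mutually inverse. The key observation is that the compatibility condition defining a global element of $\Rlr$ is exactly the condition forcing the local data $(\gamma_V)_{V \in \VN}$ to glue into a single function on all of $\VN$.

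First I would define the forward map. Given a global element $\gamma = (\gamma_V)_{V \in \VN}$, where each $\gamma_V : \down V \to \IR$ is order-preserving, define $F_\gamma : \VN \to \IR$ by $F_\gamma(V) := \gamma_V(V)$. To see that $F_\gamma$ is order-preserving, take $V' \subseteq V$ in $\VN$. The compatibility condition $\gamma_{V'} = \gamma_V|_{\down V'}$ yields $F_\gamma(V') = \gamma_{V'}(V') = \gamma_V(V')$, and since $\gamma_V$ is order-preserving on $\down V$ and $V' \leq V$, we get $\gamma_V(V') \sm \gamma_V(V) = F_\gamma(V)$ in $\IR$.

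Next I would define the backward map. Given an order-preserving $F : \VN \to \IR$, set $\gamma^F_V := F|_{\down V} : \down V \to \IR$ for each $V \in \VN$. Each $\gamma^F_V$ is order-preserving as a restriction of an order-preserving map, so $\gamma^F_V \in \Rlr_V$. The compatibility under restriction is automatic: for $V' \subseteq V$, one has $\gamma^F_V|_{\down V'} = (F|_{\down V})|_{\down V'} = F|_{\down V'} = \gamma^F_{V'}$, using $\down V' \subseteq \down V$. Hence $\gamma^F$ is a global element of $\Rlr$.

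Finally, I would verify that the two constructions are mutually inverse. Starting from $F$, one has $F_{\gamma^F}(V) = \gamma^F_V(V) = F(V)$, so $F_{\gamma^F} = F$. Conversely, starting from a global element $\gamma$, we need to show $\gamma^{F_\gamma}_V = \gamma_V$ for each $V$, \ie that $\gamma_V(V') = F_\gamma(V') = \gamma_{V'}(V')$ for every $V' \in \down V$. But this is precisely what the compatibility condition $\gamma_{V'} = \gamma_V|_{\down V'}$, evaluated at $V'$, asserts. No step is really an obstacle here; the statement is essentially a book-keeping consequence of the sheaf-like compatibility condition, and the only thing to keep in mind is to check order-preservation on both sides of the bijection.
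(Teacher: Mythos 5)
Your proof is correct and follows essentially the same route as the paper: extract the family $(\gamma_V)$, use the naturality/compatibility condition $\gamma_{V'}=\gamma_V|_{\down V'}$ to glue it into a single order-preserving map on $\VN$, and restrict in the other direction. The only (cosmetic) difference is that you define the glued map by $F_\gamma(V):=\gamma_V(V)$, which sidesteps the well-definedness check the paper performs for its choice $\tilde\gamma(V'):=\gamma_V(V')$ with $V\supseteq V'$ arbitrary, and you verify the two constructions are mutually inverse explicitly, which the paper leaves implicit.
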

\begin{proof}
Let $\underline\one$ be the terminal object in the topos $\SetVNop$, that is, the constant presheaf with a one-element set $\{*\}$ as component for each $V\in\VN$. A global element of $\Rlr$ is an arrow in $\SetVNop$, \ie a natural transformation
\begin{equation}
			\fdec{\eta}{\underline\one}{\Rlr}.
\end{equation}
For each object $V$ in the base category $\VN$, this gives a function
\begin{equation}
			\fdec{\eta_V}{\{\star\}}{\Rlr_V}
\end{equation}
which selects an element $\ga_V := \eta_V(\star)$ of $\Rlr_V$. Note that each $\ga_V$ is an order-preserving function $\fdec{\ga_V}{\down V}{\IR}$. The naturality condition, expressed by the diagram
\[
\xymatrix{
\one_V    =  \hspace{-1cm} &\{*\} \ar[rr]^{\eta_V} \ar@{=}[dd] & & \Rlr_V \ar[dd]^{\Rlr(i_{V'V})}
\\ &&& \\
\one_{V'} =  \hspace{-1cm} &\{*\} \ar[rr]^{\eta_V} & & \Rlr_{V'} 
}
\]  
then reads $\ga_{V'} = \Rlr(i_{V'V}) (\ga_{V}) = f_V|_{\down V'}$. Thus, a global element of $\Rlr$ determines a unique function
\begin{align*}
			\tilde{\ga}   :  \VN & \longrightarrow \IR\\
			V' &\longmapsto \ga_{V}(V'),
\end{align*}
where $V$ is some context such that $V'\subseteq V$. The function $\tilde\ga$ is well-defined: if we pick another $W\in\VN$ such that $V'\subseteq W$, then the naturality condition guarantees that $\ga_W(V')=\ga_V(V')$. The monotonicity condition for each $\ga_V$ forces $\tilde{\ga}$ to be a monotone (order-preserving function) from $\VN$ to $\IR$.

Conversely, given an order-preserving function $\tilde\ga:\VN\ra\IR$, we obtain a global element of $\Rlr$ by setting
\begin{equation}
			\forall V\in\VN:\ga_V:=\tilde\ga|_{\down V}.
\end{equation}
\end{proof}
\paragraph{Global elements as a dcpo.} So far, we have seen that each $\Rlr_V$, $V\in\VN$, and the set of global elements $\Gamma\Rlr$ are sets of order-preserving functions from certain posets to the interval domain $\IR$. Concretely,
\begin{align}
			&\Rlr_V = \mc{OP}(\down V,\IR),\\
			&\Ga\Rlr = \mc{OP}(\VN,\IR),
\end{align}
where $\mc{OP}(P,\IR)$ denotes the order-preserving functions from the poset $P$ to $\IR$.

We now want to apply the following result (for a proof of a more general result, see Prop. II-4.20 in \cite{GHK03}):
\begin{prop}\label{Prop_C(X,IR)dcpo}
Let $X$ be a topological space. If $P$ is a dcpo (resp. bounded complete dcpo, resp. almost bounded complete dcpo) equipped with the Scott topology, then $C(X,P)$ with the pointwise order is a dcpo (resp. bounded complete dcpo, resp. almost bounded complete dcpo).
\end{prop}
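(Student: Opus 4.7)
The plan is to handle the three completeness properties in turn, starting with the dcpo case (which carries most of the content) and then deducing the bounded and almost bounded variants as refinements. Throughout, the pointwise order on $C(X,P)$ is $f \sm g$ iff $f(x) \sm g(x)$ for all $x \in X$.

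For the dcpo case, given a directed set $S \subseteq C(X,P)$, I define the candidate supremum pointwise: $\hat{f}(x) := \dsup\{f(x) \mid f \in S\}$. This is well-defined because $\{f(x) \mid f \in S\}$ is directed in $P$ (directedness is inherited: any $h \in S$ pointwise-dominating $f, g \in S$ also dominates them at the point $x$), and $P$ is a dcpo. The only non-trivial step is showing $\hat{f}$ is Scott-continuous as a map $X \to P$. By Proposition~\ref{Scott-cts-is-cts}, it suffices to verify topological continuity. Take a Scott-open $U \subseteq P$ and $x \in \hat{f}^{-1}(U)$. Since $U$ is inaccessible by directed suprema and $\hat{f}(x)$ is a directed sup, some $f \in S$ satisfies $f(x) \in U$. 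Then $f^{-1}(U)$ is an open neighbourhood of $x$, and since $U$ is an upper set and $f(y) \sm \hat{f}(y)$ for all $y$, we get $f^{-1}(U) \subseteq \hat{f}^{-1}(U)$. Continuity of $\hat{f}$ follows, and $\hat{f}$ is by construction the supremum of $S$ for the pointwise order.

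For the (almost) bounded complete case, given $S \subseteq C(X,P)$ with a pointwise upper bound $g$, the strategy is to reduce to the directed case by taking finite joins. The collection $D := \{\bigvee F \mid F \subseteq S \text{ finite nonempty}\}$ is directed in $C(X,P)$, and its directed supremum (provided by the first part) is the desired supremum of $S$. This reduction requires two ingredients: (i) finite pointwise joins of members of $S$ give continuous functions, so that $D$ really lives in $C(X,P)$; and (ii) such joins remain bounded above by $g$, which is immediate from monotonicity. For the bounded complete (rather than merely almost bounded complete) case, one additionally has the constant function with value $\bot_P$ as a bottom of $C(X,P)$, which handles the empty subset.

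The main obstacle is ingredient (i) above: that the pointwise binary join of two continuous functions $f, g : X \to P$, where defined (i.e.\ where $f(x), g(x)$ are bounded above in $P$ for every $x$), is itself Scott-continuous. This reduces to the Scott-continuity of the partial binary join operation on a bounded complete dcpo $P$, which is a classical fact (see, e.g., Proposition~II-4.20 of \cite{GHK03}) and follows from the characterisation of Scott-open sets together with the monotonicity of join. With this lemma in hand, the two bounded complete cases follow from the directed case essentially by bookkeeping, and the proposition is established.
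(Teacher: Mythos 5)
The paper does not prove this proposition at all --- it is quoted from the literature (Prop.\ II-4.20 of \cite{GHK03}) --- so your attempt is being measured against the standard argument. Your treatment of the dcpo case is correct and complete: forming the pointwise directed supremum and checking openness of $\hat f^{-1}(U)$ via inaccessibility of $U$ together with $f^{-1}(U)\subseteq\hat f^{-1}(U)$ is exactly the right argument. (The appeal to Proposition~\ref{Scott-cts-is-cts} is superfluous: $X$ is merely a topological space, so continuity of $\hat f$ \emph{means} topological continuity; but the argument you then give is the correct one.)

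The bounded-complete cases, however, contain a genuine gap at precisely the load-bearing step, namely your ingredient (i): that the pointwise join $x\mapsto f(x)\vee g(x)$ of two bounded continuous functions is again continuous. Your justification is that the partial binary join on a bounded complete dcpo is Scott-continuous. That fact is true, but it does not give what you need: Scott-continuity of $\vee\colon P\times P\to P$ means continuity with respect to the Scott topology of the \emph{product poset} $P\times P$, whereas the pairing $\langle f,g\rangle\colon X\to P\times P$ is only continuous for the \emph{product of the Scott topologies}, which is in general strictly coarser than the Scott topology of the product when $P$ is not continuous. Hence the composite $\vee\circ\langle f,g\rangle$ cannot be concluded to be continuous by this route; what is actually needed is joint continuity of $\vee$ for the product topology, and that requires a separate argument. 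The citation of Prop.\ II-4.20 of \cite{GHK03} at this point is moreover circular, since that is the function-space result you are trying to prove. Note that the difficulty disappears when $P$ is a continuous poset, because then the Scott topology on $P\times P$ coincides with the product of the Scott topologies; this covers the paper's actual applications ($P=\IR$ and $P=\IR_\bot$ are domains), but not the proposition as stated for arbitrary (almost) bounded complete dcpos. Your remaining bookkeeping --- the reduction of arbitrary bounded suprema to directed suprema of finite joins, and the use of the constant-$\bot$ map to handle the empty family in the non-``almost'' case --- is fine once the binary step is secured; alternatively one could invoke Proposition~\ref{Prop_FinBC-dcpoBoundedComplete} to reduce to finite bounded completeness, but the existence and continuity of binary bounded joins remains the crux either way.
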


The problem is that if we want to apply this result to our situation, then we need \emph{continuous} functions between the posets, but neither $\down V$ nor $\VN$ have been equipped with a topology so far. (On $\IR$, we consider the Scott topology.) The following result shows what topology to choose:
\begin{prop}
Let $P$, $Q$ be posets, and let $\fdec{f}{P}{Q}$ be a function. If the poset $Q$ is continuous, the following are equivalent:
\begin{enumerate}
	\item\label{monscott-1} $f$ is order-preserving;
	\item\label{monscott-2} $f$ is continuous with respect to the upper Alexandroff topologies on $P$ and $Q$;
	\item\label{monscott-3} $f$ is continuous with respect to the upper Alexandroff topology on $P$ and the Scott topology on $Q$.
\end{enumerate}
\end{prop}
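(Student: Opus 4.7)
The plan is to prove the implications cyclically, $(1) \Rightarrow (2) \Rightarrow (3) \Rightarrow (1)$. Throughout, I would use the fact that the upper Alexandroff topology on any poset has \emph{all} upper sets as open sets, whereas the Scott topology keeps only those upper sets that are inaccessible by directed suprema. In particular, the Scott topology is always coarser than the upper Alexandroff topology on the same poset.

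Implication $(1) \Rightarrow (2)$ amounts to the elementary observation that if $f$ is monotone and $U \subseteq Q$ is an upper set, then $f^{-1}(U)$ is an upper set in $P$: for $x \in f^{-1}(U)$ with $x \leq x'$, monotonicity gives $f(x) \leq f(x') \in U$, hence $x' \in f^{-1}(U)$. Implication $(2) \Rightarrow (3)$ is immediate from the coarseness observation: any Scott-open subset of $Q$ is in particular upper-Alexandroff-open, so its preimage under $f$ is upper-Alexandroff-open in $P$ by hypothesis.

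The only step that requires actual thought is $(3) \Rightarrow (1)$. The key idea is that, for each $q \in Q$, the principal downset $\down q$ is Scott-closed: it is trivially a lower set, and if $S \subseteq \down q$ is directed with an existing supremum, then $q$ is an upper bound of $S$, forcing $\dsup S \in \down q$. Thus $U_q := Q \setminus \down q$ is Scott-open. Given $x \leq y$ in $P$, suppose for contradiction that $f(x) \not\leq f(y)$, so that $f(x) \in U_{f(y)}$. By hypothesis, $f^{-1}(U_{f(y)})$ is upper-Alexandroff-open in $P$, hence an upper set; from $x \in f^{-1}(U_{f(y)})$ and $x \leq y$ we conclude $y \in f^{-1}(U_{f(y)})$, i.e., $f(y) \in U_{f(y)}$, a contradiction. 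No real obstacle arises; I'll remark that the continuity hypothesis on $Q$ is not actually used (the equivalence holds for any poset $Q$), though its presence here likely reflects the setting in which the proposition will be applied, namely in combination with Proposition \ref{Prop_C(X,IR)dcpo} where $Q = \IR$ is a continuous dcpo equipped with its Scott topology.
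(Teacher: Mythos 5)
Your proof is correct, and the paper in fact states this proposition without any proof, so there is nothing to compare against; your cyclic argument $(1)\Rightarrow(2)\Rightarrow(3)\Rightarrow(1)$, with the key step that complements of principal downsets $\down q$ are Scott-open in any poset, is the standard and expected one. Your closing remark is also accurate: the continuity hypothesis on $Q$ plays no role in any of the three implications and is presumably carried along only because the proposition is applied with $Q=\IR$.
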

Hence, we put the upper Alexandroff topology on $\down V$ and $\VN$ to obtain the following equalities:
\begin{itemize}
	\item For each $V \in \VN$, we have $\Rlr_V = C((\down V)^{UA},\IR)$;
	\item for the global elements of $\Rlr$, we have $\Gamma\Rlr = C(\VN^{UA},\IR)$.
\end{itemize}
By Prop. \ref{Prop_C(X,IR)dcpo}, since $\IR$ is an almost bounded complete dcpo, both $\Rlr_V$ (for each $V$), and $\Gamma\Rlr$ also are almost bounded complete dcpos.
%
\medskip
\paragraph{A variation of the quantity-value object.} 
The following is a slight variation of the quantity-value presheaf where we allow for completely undetermined values (and not just closed intervals $[a,b]$). This is achieved by including a bottom element in the interval domain $\IR$, this bottom element of course being interpreted as the whole real line. Using the $L$-domain $\IR_\bot$, let $\RlrB$ be the presheaf defined as follows:
\begin{itemize}
	\item To each $V \in \VN$, we assign the set
	\begin{equation}
	\RlrB_V \;\;\; = \;\;\; \setdef{\fdec{f}{\down V}{\IR_\bot}}{\text{$f$ order-preserving}};
	\end{equation}
	\item to each inclusion $i_{V'V}$, we assign the function
	\begin{align*}
		\RlrB(i_{V'V}) : \RlrB_V &\longrightarrow \RlrB_{V'}\\
		f &\longmapsto f|_{\down V'}
	\end{align*}
\end{itemize}
Clearly, we have:
\begin{itemize}
	\item For each $V \in \VN$, $\RlrB_V = C((\down V)^{UA},\IR_\bot)$;
	\item $\Gamma\RlrB = C(\VN^{UA},\IR_\bot)$.
\end{itemize}
Hence, by Prop. \ref{Prop_C(X,IR)dcpo}, $\RlrB_V$ (for each $V$) and $\Ga\Rlr$ are bounded complete dcpos.
Note that $\Rlr$ is a subpresheaf of $\RlrB$.

Also, one can consider the presheaves $\RlrS$ and $\RlrSB$ defined analogously to $\Rlr$ and $\RlrB$, but requiring the functions to be Scott-continuous rather than simply order-preserving. Again, $\RlrS$ is a subpresheaf of $\RlrSB$. Moreover, note that $\RlrS$ (resp. $\RlrSB$) is a subpresheaf of $\Rlr$ (resp. $\RlrB$). For a finite-dimensional $\N$, since the poset $\VN$ has finite height, any order-preserving function is Scott-continuous, hence there is no difference between $\Rlr$ and $\RlrS$ (resp. $\RlrB$ and $\RlrSB$).
The presheaves $\RlrS$ and $\RlrSB$ using Scott-continuous functions are interesting since the arrows of the form $\breve\delta(\hat A):\Sig\ra\Rlr$ that one obtains from daseinisation of self-adjoint operators \cite{DI(2),DI(Coecke)08} actually have image in $\RlrS$. We will not prove this result here, since this would lead us too far from our current interest. We have:
\begin{itemize}
	\item For each $V \in \VN$, $\RlrS_V = C((\down V)^{S},\IR)$ and $\RlrSB_V = C((\down V)^{S},\IR_\bot)$;
	\item $\Gamma\RlrS = C(\VN^{S},\IR)$ and $\Gamma\RlrSB = C(\VN^{S},\IR_\bot)$.
\end{itemize}

\medskip
\paragraph{Domain-theoretic structure on global sections.}
We now consider continuity. The following result (from theorem A in \cite{YJ96}) helps to clarify things further:
\begin{prop}
  If $D$ is a continuous $L$-domain and $X$ a core compact space (i.e. its poset of open sets is continuous), then $C(X,D)$ (where $D$ has the Scott topology) is a continuous $L$-domain.
\end{prop}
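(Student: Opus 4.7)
The plan is to verify in turn the three defining properties of a continuous $L$-domain: that $C(X,D)$ is a dcpo under the pointwise order, that each principal ideal $\down f$ is a complete lattice, and that the way-below relation in $C(X,D)$ is sufficiently rich that each $f$ is the directed supremum of the elements way-below it. Since an $L$-domain is in particular almost bounded complete, Proposition \ref{Prop_C(X,IR)dcpo} (applied with $D$ in place of $\IR$) immediately supplies the dcpo structure: directed suprema in $C(X,D)$ are computed pointwise, because evaluation at any $x \in X$ preserves such suprema, and the resulting pointwise supremum is continuous thanks to the Scott-continuity of joins of directed families in $D$.

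For the $L$-domain condition, fix $f \in C(X,D)$ and an arbitrary $S \subseteq \down f$. The only candidate for $\bigjoin S$ inside $\down f$ is the function
\[
  g(x) \;:=\; \bigjoin \setdef{s(x)}{s \in S},
\]
where the join is taken in the complete lattice $\down f(x)$ supplied by the $L$-domain axiom on $D$. The key point is to show $g \in C(X,D)$. I would argue this by noting that $g = \bigjoin_{T} t$, where $T$ is the directed set of pointwise joins of \emph{finite} subfamilies of $S$, each of which lies in $\down f$; continuity of $g$ then follows from the dcpo step above, provided one verifies that the binary join inside each $\down f(x)$ assembles into a Scott-continuous map, which is the precise content of the $L$-domain hypothesis together with the continuity of $D$. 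Meets inside $\down f$ are handled dually, or deduced from almost bounded completeness together with the existence of joins.

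For continuity of $C(X,D)$, I would follow the classical Scott-style strategy of constructing a basis of \emph{step-function} approximants. Given $f \in C(X,D)$, one considers finite families $\{(U_i,d_i)\}_{i=1}^n$ where $d_i \waybelow f(x)$ for all $x$ in some open $V_i \supseteq \overline{U_i}$ (available by continuity of $D$, since $f^{-1}(\uup d_i)$ is then Scott-open in $X$) and $U_i \waybelow V_i$ in $\O{X}$ (available by core compactness). Each such datum produces a single-step function supported on $U_i$ with value $d_i$, and using the $L$-domain structure of each $\down f(x)$ one patches these together into a joint approximant $h \in \down f$. The collection of such $h$ is directed and I would show (i) each $h$ is way-below $f$ in $C(X,D)$ via an interpolation argument: if $f \leq \dsup T$ for a directed $T \subseteq C(X,D)$, then the open sets $t^{-1}(\uup d_i)$ cover $V_i$ as $t$ ranges over $T$, and $U_i \waybelow V_i$ extracts a finite subcover, hence a single $t \geq h$; and (ii) the supremum of all such $h$ is $f$, by evaluating at each $x$.

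The main obstacle is step (i), the way-below verification. It is here that both hypotheses are genuinely used and must be combined: the continuity of $\O{X}$ (\emph{via} core compactness of $X$) lets one pass from pointwise approximation to open-set approximation, while the continuity and $L$-domain structure of $D$ guarantee, respectively, that the single-step pieces exist and that their joint assembly stays continuous and inside the principal ideal $\down f$. Without the $L$-domain assumption the patched step functions need not be continuous, which is precisely why the result is tailored to $L$-domains rather than to arbitrary continuous dcpos. This is the technical core of theorem~A in \cite{YJ96}, which I would cite rather than reprove in detail.
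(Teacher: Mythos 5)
The paper offers no proof of this proposition at all: it is imported verbatim as Theorem~A of \cite{YJ96}, so your closing decision to cite that theorem for the technical core is precisely what the authors do. Your surrounding outline --- pointwise directed suprema for the dcpo structure, local complete lattices in each $\down f$ via bounded pointwise joins, and a step-function basis whose way-below property is certified by interpolating $U_i \waybelow V_i$ in $\mathcal{O}(X)$ against $d_i \waybelow f(x)$ in $D$ --- is the standard route by which the cited result is actually established, so the proposal is consistent with the paper.
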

In particular, any locally compact space is core compact (\cite{EEK98}). Note that $\VN$ with the Alexandroff topology is always locally compact; it is even compact if we consider $\VN$ to have a least element. Moreover, we saw before that $\IR_\bot$ is an $L$-domain. Hence, we can conclude that the global sections of $\RlrB$ form an $L$-domain. Similarly, the sections of $\RlrB_V$ (over $\down V$) form an $L$-domain.

For the presheaves $\Rlr,\;\RlrS$, and $\RlrSB$, it is still an open question whether their global sections form a domain or not.


\section{The Category of Contexts as a Dcpo}			\label{Sec_V(N)AsDomain}
We now turn our attention to the poset $\VN$. In this section, we investigate it from the perspective of domain theory. We will show that $\VN$ is a dcpo, and that the assignment $\N \mapsto \VN$ gives a functor from von Neumann algebras to the category of dcpos.

From a physical point of view, the fact that $\VN$ is a dcpo shows that the information contained in a coherent set of physical contexts is captured by a larger (limit) context.
If $\N$ is a finite-dimensional algebra, that is, a finite direct sum of matrix algebras, then $\VN$ is an algebraic domain. This easy fact will be shown below. We will show in section \ref{Sec_VNNoDomainInInfDim} that, for other types of von Neumann algebras, $\VN$ is not continuous.

In fact, most of this section is not concerned with $\VN$ itself, but with a more general kind of posets of which $\VN$ is but one example.

The fact that $\VN$ is an algebraic domain in the case of matrix algebras $\N$ was suggested to us in private communication by Chris Heunen.
\medskip
\paragraph{Domains of subalgebras.} Common examples of algebraic domains are the posets of subalgebras of an algebraic structure, e.g. the poset of subgroups of a group. (Actually, this is the origin of the term `algebraic'.) We start by formalising this statement and prove that posets of this kind are indeed domains. This standard result will be the point of departure for the following generalisations to posets of subalgebras modulo equations and topological algebras.

Mathematically, we will be using some simple universal algebra. A cautionary remark for the reader familiar with universal algebra: the definitions of some concepts were simplified. For example, for a fixed algebra $\aA$, its subalgebras are simply defined to be subsets (and not algebras in their own right).

\begin{defn} A \emph{signature} is a set $\Sigma$ of so-called \emph{operation symbols}
  together with a function $\fdec{ar}{\Sigma}{\bbN}$ designating the \emph{arity} of each symbol.
\end{defn}
\begin{defn}
A \emph{$\Sigma$-algebra} (or an algebra with signature $\Sigma$) is a pair
$\aA= \pair{A;\aF}$ consisting of a set $A$ (the \emph{support}) and a set of operations
$\aF = \setdef{f^\aA}{f \in \Sigma}$, where $f^\aA:A^{ar(f)}\longrightarrow A$ is said to realise the operation symbol $f$.
\end{defn}

Note that the signature describes the operations an algebra is required to have. Unless
the distinction is necessary, we will omit the superscript and therefore not make a distinction between operator symbols and operations themselves.

As an example, a monoid $\pair{M; \comp, 1}$ is an algebra with two operations, $\comp$ and $1$, of arities $ar(\comp) = 2$ and $ar(1) = 0$.

\begin{defn}
  Given an algebra $\aA= \pair{A;\aF}$, a \emph{subalgebra} of  $\aA$ is a subset of $A$ which is closed under all operations $f \in \aF$. We denote  the set of subalgebras of $\aA$ by $Sub_{\aA}$.
\end{defn}

\begin{defn}
Let $\aA = \pair{A;\aF}$ be an algebra  and $G \subseteq A$.
The \emph{subalgebra of  $\aA$ generated by $G$}, denoted $\pair{G}$, is the smallest algebra containing $G$. This is given explicitly by the closure of
$G$ under the operations, i.e. given
\begin{align*}
G_0 \; & =\; G,\\
G_{k+1} \; & =\; G_k \cup \bigcup_{f \in \aF} \setdef{f(x_1, \cdots, x_{ar(f)})}{x_1, \cdots, x_{ar(f)} \in G_k},
\end{align*}
we obtain
\begin{equation}
			\pair{G} \; =\; \bigcup_{k \in \bbN} G_k.
\end{equation}
A subalgebra $B \subseteq A$ is said to be \emph{finitely generated} whenever it is generated
by a finite subset of $A$. 
\end{defn}

We will consider the poset $(Sub_{\aA}, \subseteq)$ in some detail. The following results are well-known and rather easy to prove:
\begin{prop}\label{SubAIsCompleteLattice}
$(Sub_{\aA}, \subseteq)$ is a complete lattice with the operations
\begin{align*}
\bigwedge \aS \; & :=\; \bigcap \aS,\\
\bigvee \aS \; & :=\; \pair{\bigcup \aS}
\end{align*}
\end{prop}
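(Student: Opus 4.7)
The plan is to verify the two assertions directly by elementary arguments; essentially no machinery beyond the definitions is required. I would first establish that arbitrary intersections of subalgebras are again subalgebras and then argue that the assignment $\aS \mapsto \pair{\bigcup \aS}$ gives the corresponding joins. That $(\Sub_{\aA},\subseteq)$ is a \emph{complete} lattice then drops out, and it is worth spelling out the two boundary cases $\aS = \emptyset$ (top = $A$, bottom = $\pair{\emptyset}$) separately, since the formulas for $\bigwedge$ and $\bigvee$ degenerate there.

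For the meet, let $\aS \subseteq \Sub_{\aA}$ and set $M := \bigcap \aS$. For each $f \in \aF$ of arity $n = ar(f)$ and each tuple $(x_1, \ldots, x_n) \in M^n$, every $x_i$ belongs to every $B \in \aS$; since each such $B$ is closed under $f$, the element $f(x_1, \ldots, x_n)$ lies in every $B \in \aS$, hence in $M$. Thus $M \in \Sub_{\aA}$. That $M$ is the greatest lower bound of $\aS$ in $\Sub_{\aA}$ is inherited from its being the greatest lower bound in the powerset $(\ps A, \subseteq)$, since any subalgebra below all members of $\aS$ is in particular a subset contained in $M$.

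For the join, I would first verify that the explicit construction of $\pair G$ given in the preceding definition really does produce the smallest subalgebra containing $G$. Closure under each $f \in \aF$: any tuple in $\pair G = \bigcup_{k \in \bbN} G_k$ is contained in some $G_k$ (by the fact that the $G_k$ form an ascending chain and tuples are finite), so its image under $f$ is in $G_{k+1} \subseteq \pair G$. Minimality: by induction on $k$, every subalgebra $C$ with $G \subseteq C$ satisfies $G_k \subseteq C$ for all $k$, hence contains $\pair G$. Applying this with $G = \bigcup \aS$, each $B \in \aS$ satisfies $B \subseteq \bigcup \aS \subseteq \pair{\bigcup \aS}$, so $\pair{\bigcup \aS}$ is an upper bound; conversely, any subalgebra $C$ with $B \subseteq C$ for all $B \in \aS$ satisfies $\bigcup \aS \subseteq C$, and then by minimality $\pair{\bigcup \aS} \subseteq C$. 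So $\pair{\bigcup \aS}$ is the least upper bound.

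There is no serious obstacle here: the result is completely standard universal algebra. The one point deserving care is the empty family, where $\bigcap \emptyset = A$ serves as top, and $\pair{\bigcup \emptyset} = \pair{\emptyset}$ serves as bottom (the subalgebra consisting of the values forced by the nullary operations and all elements obtainable from them; in particular, this is empty iff $\aF$ contains no constants). Since both $\bigwedge$ and $\bigvee$ are defined for every subset $\aS \subseteq \Sub_{\aA}$, the poset is a complete lattice, which completes the proof.
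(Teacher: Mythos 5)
Your proof is correct and is exactly the standard universal-algebra argument that the paper alludes to when it states this result without proof as ``well-known and rather easy to prove'': closure of arbitrary intersections under the operations gives meets, minimality of $\pair{\bigcup\aS}$ gives joins, and the empty-family cases yield top $=A$ and bottom $=\pair{\emptyset}$. The verification that $\pair{G}=\bigcup_k G_k$ is closed under each operation because any finite tuple already lies in some $G_k$ of the ascending chain is the one point that needs saying, and you say it.
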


\begin{prop}			\label{Prop_UnionOfDirectedSIsSubalg}
If $\aS \subseteq Sub_{\aA}$ is directed, then $\bigvee \aS = \pair{\bigcup \aS} = \bigcup \aS$.
\end{prop}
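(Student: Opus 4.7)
The plan is to show that $\bigcup \aS$ is already a subalgebra of $\aA$; this will immediately imply $\pair{\bigcup\aS} = \bigcup\aS$, and combining this with Proposition \ref{SubAIsCompleteLattice} (which gives $\bigvee \aS = \pair{\bigcup\aS}$) will complete the proof.

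To verify that $\bigcup\aS$ is closed under every $f \in \aF$, I would fix an operation symbol $f$ of arity $n = ar(f)$ and pick elements $x_1, \ldots, x_n \in \bigcup\aS$. For each $i \in \{1,\ldots,n\}$, there is some $B_i \in \aS$ with $x_i \in B_i$. The key step is then a finite induction using directedness: since $\aS$ is directed and $\{B_1, \ldots, B_n\}$ is a finite subset of $\aS$, one can find $B \in \aS$ with $B_i \subseteq B$ for all $i$. (For $n=0$, i.e.\ nullary operations, it suffices that $\aS$ is nonempty, which follows from the definition of directedness; any $B \in \aS$ works.) Then $x_1, \ldots, x_n \in B$, and because $B$ is a subalgebra, $f(x_1, \ldots, x_n) \in B \subseteq \bigcup\aS$.

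There is no real obstacle here; the only subtlety worth flagging is that directedness gives upper bounds only for pairs in its bare formulation, so one must observe (by a trivial induction on $n$) that it extends to finite subsets, and that the base case $n=0$ is covered by nonemptiness of $\aS$. Once $\bigcup\aS$ is known to be a subalgebra, it is automatically the least upper bound of $\aS$ in $(Sub_{\aA}, \subseteq)$, since any subalgebra containing every $B \in \aS$ contains their union, and conversely $\bigcup\aS$ contains each $B \in \aS$.
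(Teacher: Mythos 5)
Your argument is correct and complete: the directedness of $\aS$ yields a common upper bound for the finitely many subalgebras containing the arguments of any operation (with nonemptiness handling nullary operations), so $\bigcup\aS$ is itself a subalgebra and hence equals $\pair{\bigcup\aS} = \bigvee\aS$. The paper states this result without proof as a well-known fact, and your proof is exactly the standard argument it implicitly relies on.
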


Let $B \in Sub_{\aA}$. We write
\begin{align}			\nonumber
Sub_{fin}(B)\; & :=\; \setdef{C \in Sub_{\aA}}{C \subseteq B\, \text{and}\, C\, \text{is finitely generated}}\\
                            & =\; \setdef{\pair{x_1, \cdots, x_n}}{n \in \bbN, \enset{x_1, \cdots, x_n} \subseteq B}
\end{align}
for the finitely generated subalgebras of $B$.
\begin{lem}\label{SubAfin}
  For all $B\in Sub_\aA$, we have $B = \dsup Sub_{fin}(B) = \bigcup Sub_{fin}(B)$.
\end{lem}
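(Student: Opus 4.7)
The plan is to verify the three ingredients of the equality in sequence: directedness of $Sub_{fin}(B)$, the identification of its directed supremum with the union, and finally the equality of this union with $B$.

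First, I would check that $Sub_{fin}(B)$ is a directed subset of the complete lattice $Sub_\aA$ (cf.\ Prop.\ \ref{SubAIsCompleteLattice}). It is nonempty, since the subalgebra $\langle \emptyset \rangle$ generated by no elements is finitely generated and contained in every $B$ (in the cases where $\Sigma$ has no constants, one can use $\langle x \rangle$ for any $x \in B$, or argue that $B$ itself is generated by the empty set of new generators when appropriate). Given $C_1 = \langle x_1, \dots, x_n \rangle$ and $C_2 = \langle y_1, \dots, y_m \rangle$ in $Sub_{fin}(B)$, their join in the lattice $Sub_\aA$ is $C_1 \vee C_2 = \langle C_1 \cup C_2 \rangle = \langle x_1, \dots, x_n, y_1, \dots, y_m \rangle$, which is finitely generated and contained in $B$ (since $B$ is itself a subalgebra containing both $C_1$ and $C_2$). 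Hence $C_1 \vee C_2 \in Sub_{fin}(B)$ and serves as the required upper bound, so $Sub_{fin}(B)$ is directed.

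Next, because $Sub_{fin}(B)$ is directed, Prop.\ \ref{Prop_UnionOfDirectedSIsSubalg} applies and yields
\begin{equation*}
\dsup Sub_{fin}(B) \;=\; \bigvee Sub_{fin}(B) \;=\; \bigcup Sub_{fin}(B).
\end{equation*}
This takes care of the second equality in the statement.

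It remains to prove $B = \bigcup Sub_{fin}(B)$. The inclusion $\supseteq$ is immediate: every element of $Sub_{fin}(B)$ is by definition a subset of $B$, so their union lies in $B$. For $\subseteq$, pick any $x \in B$; then the singly-generated subalgebra $\langle x \rangle$ is finitely generated and, since $B$ is closed under the operations and contains $x$, it contains $\langle x \rangle$. Thus $\langle x \rangle \in Sub_{fin}(B)$ and $x \in \langle x \rangle \subseteq \bigcup Sub_{fin}(B)$, giving $B \subseteq \bigcup Sub_{fin}(B)$. There is no real obstacle here; the only subtlety is the minor care needed to ensure $Sub_{fin}(B)$ is nonempty (handled by the trivial subalgebra), and the argument is otherwise a textbook verification using only the definition of finitely generated subalgebra and the preceding propositions.
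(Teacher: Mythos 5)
Your proof is correct and is exactly the standard argument the paper has in mind (the lemma is stated there without proof, as one of the ``well-known and rather easy to prove'' facts): directedness of $Sub_{fin}(B)$ via joining finite generating sets inside $B$, Prop.~\ref{Prop_UnionOfDirectedSIsSubalg} to identify the directed supremum with the union, and $x \in \langle x \rangle \subseteq B$ for the remaining inclusion. No gaps.
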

%

We now characterise the way-below relation for the poset $Sub_\aA$.
\begin{lem}\label{SubAWayBelow}
For $B, C \in Sub_{\mathcal{A}}$, one has $C \wb B$ if and only if $C \subseteq B$ and $C$ is finitely generated.
\end{lem}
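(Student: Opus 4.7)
The plan is to prove the two directions of the biconditional separately, leaning on Proposition \ref{Prop_UnionOfDirectedSIsSubalg} (directed suprema in $Sub_\aA$ coincide with ordinary unions) and Lemma \ref{SubAfin} (each subalgebra is the directed supremum of its finitely generated subsubalgebras).

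For the backward direction, I would assume $C \subseteq B$ and write $C = \pair{x_1, \ldots, x_n}$. Given any directed family $\aS \subseteq Sub_\aA$ with $B \subseteq \dsup\aS$, Proposition \ref{Prop_UnionOfDirectedSIsSubalg} lets me replace $\dsup\aS$ by $\bigcup\aS$, so each generator $x_i \in B$ lies in some $S_i \in \aS$. Finitely iterated directedness then produces a single $S \in \aS$ with $S_1, \ldots, S_n \sm S$, and since $S$ is a subalgebra closed under the operations of $\aA$, it contains $\pair{x_1,\ldots,x_n} = C$. This establishes $C \wb B$; the argument is a routine unpacking of definitions.

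For the forward direction, suppose $C \wb B$. The first step is to apply the way-below property to the family $Sub_{fin}(B)$, which by Lemma \ref{SubAfin} is directed with supremum $B$; this yields some $D \in Sub_{fin}(B)$ with $C \subseteq D$, and in particular $C \subseteq B$. The second step is to conclude that $C$ itself is finitely generated. The plan is to apply the way-below condition a second time, to a directed family engineered so that any member $S$ with $C \subseteq S$ is forced to be a finitely generated subalgebra of $C$: then $S \subseteq C \subseteq S$ pins down $C = S$ and gives finite generation.

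The main obstacle is designing this second directed family correctly. The naive choice $Sub_{fin}(C) \cup \{B\}$ (directed, with supremum $B$) does not suffice, since the witness promised by $C \wb B$ might simply be $B$ itself, yielding no information about $C$. Overcoming this requires a more careful construction, for instance working inside the finitely generated $D$ obtained in the first step and exploiting its internal structure, or refining the family so as to exclude $B$ while still having supremum $B$. In the regime of main interest here (matrix algebras), $D$ is finite-dimensional, so any subalgebra of $D$, including $C$, is automatically finitely generated, and the argument closes.
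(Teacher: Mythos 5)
Your backward direction is correct and is the standard argument. In the forward direction, your first application of $C \waybelow B$ to the directed family $Sub_{fin}(B)$ is also correct and yields $C \subseteq D \subseteq B$ with $D$ finitely generated. But the gap you flag at the second step is not a defect of presentation that a cleverer directed family would repair: it cannot be closed, because the ``only if'' half of the lemma as stated is false for general $\Sigma$-algebras. What the way-below relation in $Sub_{\aA}$ actually characterises is containment in a compact (equivalently, finitely generated) subalgebra of $B$: one has $C \waybelow B$ if and only if there exists a finitely generated $D$ with $C \subseteq D \subseteq B$ (your first step gives ``only if''; your backward argument applied to $D$ in place of $C$ gives ``if''). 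This coincides with ``$C \subseteq B$ and $C$ finitely generated'' only when every subalgebra of a finitely generated algebra is itself finitely generated, which fails in general. Concretely, take $\aA = B = F_2$, the free group on two generators, and $C = [F_2,F_2]$ its commutator subgroup: $C$ is free of countably infinite rank, hence not finitely generated, yet $C$ is contained in the index-four subgroup $D = \ker(F_2 \rightarrow \Z/2 \times \Z/2)$, which is finitely generated by Nielsen--Schreier; hence $C \waybelow B$ while $C$ is not finitely generated.

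The paper states the lemma without proof as a well-known fact, so there is no hidden argument you failed to reproduce. The inaccuracy is harmless downstream: Proposition \ref{Subdomain} (that $Sub_{\aA}$ is an algebraic complete lattice) needs only that the compact elements are exactly the finitely generated subalgebras and that $Sub_{fin}(B)$ is directed with supremum $B$ --- both of which your two correct steps already establish, since $C \waybelow C$ forces $C$ to equal some member of $Sub_{fin}(C)$. And in the application to $\VN$ for matrix algebras every subalgebra is finite-dimensional, hence finitely generated, so the equivalence as literally stated does hold there, exactly as you observe at the end. If you want a statement you can actually prove in full generality, replace the lemma by: $C$ is compact iff $C$ is finitely generated, and $C \waybelow B$ iff $C$ is contained in some finitely generated subalgebra of $B$.
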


\begin{prop}\label{Subdomain}
$Sub_{\aA}$ is an algebraic complete lattice (i.e. a complete lattice which is an algebraic domain).
\end{prop}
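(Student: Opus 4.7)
The plan is to combine the three preceding results — Propositions \ref{SubAIsCompleteLattice}, \ref{Prop_UnionOfDirectedSIsSubalg} and Lemmas \ref{SubAfin}, \ref{SubAWayBelow} — which together already contain all the content needed. The proof essentially amounts to recognising that the finitely generated subalgebras are exactly the compact elements of $Sub_{\aA}$, and then showing that every subalgebra is a directed supremum of its finitely generated subalgebras.

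First, I would observe that $Sub_{\aA}$ is a dcpo: being a complete lattice by Proposition \ref{SubAIsCompleteLattice}, it admits all suprema, in particular those of directed sets. Next, I would identify the compact elements. By Lemma \ref{SubAWayBelow}, $C \waybelow C$ holds precisely when $C$ is finitely generated, so $K(Sub_{\aA})$ is exactly the set of finitely generated subalgebras. Combining this with Lemma \ref{SubAWayBelow} once more yields $\ddown B \cap K(Sub_{\aA}) = Sub_{fin}(B)$ for any $B \in Sub_{\aA}$.

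Then I would verify algebraicity. For any $B \in Sub_{\aA}$, the set $Sub_{fin}(B)$ is directed: given $\pair{x_1,\dots,x_n}$ and $\pair{y_1,\dots,y_m}$ both sitting inside $B$, the subalgebra $\pair{x_1,\dots,x_n,y_1,\dots,y_m}$ is again finitely generated and contained in $B$, and it dominates both. Lemma \ref{SubAfin} then gives $\dsup(\ddown B \cap K(Sub_{\aA})) = \dsup Sub_{fin}(B) = B$, which is precisely the algebraicity condition. Since $Sub_\aA$ is already a complete lattice, this makes it an algebraic complete lattice.

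There is no real obstacle here: the substance of the argument is packed into Lemmas \ref{SubAfin} and \ref{SubAWayBelow}, and the proposition is essentially a matter of assembling these observations. The only small point worth being explicit about is checking directedness of $Sub_{fin}(B)$ so that the directed-supremum notation $\dsup$ is justified, as well as noting that complete lattices are in particular dcpos.
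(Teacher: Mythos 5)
Your proof is correct and is exactly the assembly of Propositions \ref{SubAIsCompleteLattice} and Lemmas \ref{SubAfin}, \ref{SubAWayBelow} that the paper intends (it states the proposition without proof as an immediate consequence of those results). The identification $K(Sub_{\aA}) = \{\text{finitely generated subalgebras}\}$, the directedness of $Sub_{fin}(B)$, and the appeal to Lemma \ref{SubAfin} are precisely the right steps, and nothing is missing.
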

\medskip
\paragraph{Domains of subalgebras with additional algebraic properties.} We are interested in subalgebras that satisfy certain algebraic properties that are not present in the algebra $\aA$. For example, if $\aA$ is a monoid, we may be interested in the set of abelian submonoids of $\aA$. To formalise this, we must be able to incorporate 'equational properties'.

\begin{defn}\label{def:poly}
  The set $\Sigma[x_1, \cdots, x_n]$ of terms (or polynomials) over the signature $\Sigma$ in the variables $x_1, \cdots, x_n$ is defined inductively as follows:
\begin{itemize}
  \item For each $i\in\{1,...,n\}$, we have $x_i \in \Sigma[x_1, \cdots, x_n]$;
  \item for each $f \in \Sigma$, if $p_1, \cdots, p_{ar(f)} \in \Sigma[x_1, \cdots, x_n]$, then $f(p_1, \cdots, p_{ar(f)}) \in \Sigma[x_1, \cdots, x_n]$.
\end{itemize}
\end{defn}

Note that $f(p_1, \cdots, p_{ar(f)})$ does not denote the application of the function $f$, but simply a formal string of symbols. The variables are also just symbols.

\begin{defn}\label{def:evaluatepoly}
Let $\aA$ be a $\Sigma$-algebra. Let $p \in \Sigma[x_1, \cdots, x_n]$, and let $\fdec{\nu}{\enset{x_1, \cdots, x_n}}{A}$ (called a valuation of the variables).
Then one extends $\nu$ to a function $\fdec{\rmean{\cdot}_{\nu}}{\Sigma[x_1, \cdots, x_n]}{A}$ by the following inductive rules:
\begin{itemize}
	\item $\rmean{x_i}_{\nu} := \nu(x_i)$;
	\item for each $f \in \Sigma$, $\rmean{f(p_1, \cdots, p_n)}_{\nu} := f^\aA(\rmean{p_1}_{\nu}, \cdots, \rmean{p_n}_{\nu})$.
\end{itemize}
\end{defn}

\begin{defn}
A \emph{polynomial equation} over $\aA$ is a pair $\pair{p,q}$ of polynomials in the same variables. 
A \emph{system of polynomial equations} over $\aA$ is a set of such pairs.
\end{defn}

\begin{defn}
A subalgebra $B \in Sub_{\aA}$ is said to \emph{satisfy an equation}  $\pair{p,q}$ in the variables $\enset{x_1, \cdots, x_n}$, whenever, for all valuations $\fdec{\nu}{\enset{x_1, \cdots, x_n}}{B \subseteq A}$, we have $\rmean{p}_{\nu} =  \rmean{q}_{\nu}$.
$B$ is said to \emph{satisfy a system $E$ of equations} if it satisfies all $\pair{p,q} \in E$. Further, we define 
\begin{equation}
			Sub_{\aA}/E\; =\; \setdef{B \in Sub_{\aA}}{B\; \text{satisfies}\; E}.
\end{equation}
\end{defn}

We will now consider the poset $\pair{Sub_{\aA}/E, \subseteq}$, which is a subposet of $\pair{Sub_{\aA}, \subseteq}$ considered before. Clearly, if $\aA$ satisfies $E$, we have $Sub_{\aA}/E = Sub_{\aA}$, which is not very interesting. In all other cases $Sub_{\aA}/E$ has no top element. In fact, it is not even a lattice in most cases. However, we still have some weakened form of completeness:

\begin{prop}\label{SubEdomain}
$Sub_{\aA}/E$ is a bounded-complete algebraic domain, i.e. it is a bc-dcpo and it is algebraic.
\end{prop}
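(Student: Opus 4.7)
The plan is to exhibit $Sub_{\aA}/E$ as a closed sub-structure of the algebraic lattice $Sub_{\aA}$ in a way that inherits both directed-completeness and algebraicity, and then to observe that bounded completeness comes for free from closure under nonempty intersections.

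First, I would establish two closure properties of $Sub_{\aA}/E$ inside $Sub_{\aA}$. \textbf{(i)} \emph{Closure under arbitrary (nonempty) intersections:} if each $B_i \in Sub_{\aA}/E$ and $B = \bigcap_i B_i$, then any valuation $\nu$ of the variables of an equation $\pair{p,q} \in E$ into $B$ is a valuation into each $B_i$, so $\rmean{p}_\nu = \rmean{q}_\nu$ follows. \textbf{(ii)} \emph{Closure under directed unions:} if $\aS \subseteq Sub_{\aA}/E$ is directed, then by Prop.~\ref{Prop_UnionOfDirectedSIsSubalg} its join in $Sub_{\aA}$ is $\bigcup \aS$; given an equation $\pair{p,q} \in E$ in variables $x_1,\dots,x_n$ and a valuation $\nu$ into $\bigcup \aS$, directedness lets one find a single $B \in \aS$ containing $\nu(x_1),\dots,\nu(x_n)$, and since $B$ satisfies $\pair{p,q}$ the equation also holds for $\nu$. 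Consequently directed suprema in $Sub_{\aA}/E$ exist and agree with those in $Sub_{\aA}$, giving the dcpo structure.

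Next I would obtain bounded completeness. If $\aS \subseteq Sub_{\aA}/E$ is bounded above by some $U \in Sub_{\aA}/E$, the set
\[
	\mathcal{U} \;:=\; \setdef{W \in Sub_{\aA}/E}{\bigcup \aS \subseteq W}
\]
is nonempty (it contains $U$), so by (i) the intersection $\bigcap \mathcal{U}$ lies in $Sub_{\aA}/E$, contains $\bigcup \aS$, and is the least such, hence it is the supremum of $\aS$ in $Sub_{\aA}/E$. Combined with directed-completeness this yields a bc-dcpo.

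For algebraicity, the key remark is that $Sub_{\aA}/E$ is downward closed in $Sub_{\aA}$: if $B$ satisfies $E$ and $C \subseteq B$, then any valuation into $C$ is already a valuation into $B$, so $C$ satisfies $E$ as well. In particular $Sub_{fin}(B) \subseteq Sub_{\aA}/E$ for every $B \in Sub_{\aA}/E$, and by Lemma~\ref{SubAfin} we have $B = \bigcup Sub_{fin}(B) = \dsup Sub_{fin}(B)$, the supremum being computed in $Sub_{\aA}/E$ by the above. It remains to see that each finitely generated $C \in Sub_{\aA}/E$ is compact in $Sub_{\aA}/E$: if $C \sm \dsup \aS = \bigcup \aS$ for a directed $\aS \subseteq Sub_{\aA}/E$, then a finite generating set for $C$ lies in $\bigcup \aS$, and directedness produces a single member of $\aS$ containing all generators and hence all of $C$. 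Thus the compact elements of $Sub_{\aA}/E$ form a basis, so $Sub_{\aA}/E$ is algebraic.

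There is no serious obstacle here; the only delicate point is making sure that suprema in $Sub_{\aA}/E$ are the correct ones (directed suprema coincide with those in $Sub_{\aA}$, while non-directed bounded suprema must be taken by intersecting from above within $Sub_{\aA}/E$, since $Sub_{\aA}/E$ is in general not a sublattice of $Sub_{\aA}$). Once this is sorted out, all remaining verifications are direct consequences of the closure properties (i) and (ii) and the downward closure of $Sub_{\aA}/E$ in $Sub_{\aA}$.
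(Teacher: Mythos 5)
Your proof is correct and follows essentially the same route as the paper: both arguments rest on the observations that $Sub_{\aA}/E$ is downward closed and closed under directed unions in $Sub_{\aA}$, from which the dcpo structure, bounded completeness and algebraicity all follow. The only cosmetic differences are that for bounded suprema you intersect from above (using closure under nonempty intersections) where the paper takes $\pair{\bigcup\aS}$ and invokes downward closure under the given bound -- the two constructions yield the same subalgebra -- and that you verify compactness of the finitely generated elements in $Sub_{\aA}/E$ directly rather than citing the general fact that Scott-closed subsets inherit algebraicity.
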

\begin{proof}
  To proof that it is an algebraic domain, it is enough to show that $Sub_{\aA}/E$ is a Scott-closed subset of $Sub_\aA$,
  i.e., that it is closed for directed suprema (hence a dcpo) and downwards closed (hence, given the first condition, $\ddown x$ is the same in both posets and algebraicity follows from the same property on $Sub_\aA$).
  Downwards closeness follows immediately, since a subset of an set satisfying an equation also satisfies it. 

  For directed completeness, 
  let $\aS$ be a directed subset of $Sub_{\aA}/E$.
  We want to show that its supremum, the union $\bigcup \aS$, is still in $Sub_{\aA}/E$.
  Let $\pair{p,q} \in E$
be an equation over the variables $\enset{x_1, \ldots, x_n}$ and $\fdec{\nu}{\enset{x_1, \ldots, x_n}}{\bigcup \aS}$ any valuation on $\bigcup \aS$. Let us write $a_1, \ldots, a_n$ for $\nu(x_1), \ldots, \nu(x_n)$. There exist $S_1, \ldots, S_n \in \aS$ such that $a_1 \in S_1, \ldots, a_n \in S_n$. By directedness of $\aS$, there is an $S \in \aS$ such that $\enset{a_1, \ldots, a_n} \subseteq S$. Since $S \in Sub_{\aA}/E$ and the valuation $\nu$ is defined on $S$, one must have $\rmean{p}_{\nu} =  \rmean{q}_{\nu}$. Therefore, $\bigcup \aS$ satisfies the equations in $E$ and
so $Sub_{\aA}/E$ is closed under directed suprema.

For bounded completeness, note that if $\aS \subseteq Sub_{\aA}/E$ is bounded above by a subalgebra $S'$ that also satisfies the equations, then the subalgebra generated by $\aS$, $\left\langle \bigcup\aS \right\rangle$, being a subset of $S'$, must also satisfy the equations. So $\left\langle \bigcup\aS \right\rangle$ is in $Sub_{\aA}/E$ and is a supremum for $\aS$ in this poset.
\end{proof}

\medskip
\paragraph{Topologically closed subalgebras.} The results of the previous section apply to any kind of algebraic structure. In particular, this is enough to conclude that, given a $*$-algebra $\N$, the set of its abelian $*$-subalgebras forms an algebraic domain. If we restrict attention to the finite-dimensional situation, \ie matrix algebras $\N$, or finite direct sums of matrix algebras, then the result shows that the poset $\VN$ of abelian von Neumann subalgebras of a matrix algebra $\N$ is an algebraic domain, since every algebraically closed abelian $*$-subalgebra is also weakly closed (and hence a von Neumann algebra) in this case.

But for a general von Neumann algebra $\N$, not all abelian $*$-subalgebras need to be abelian von Neumann subalgebras. We need to consider the extra condition that each given subalgebra is closed with respect to a certain topology, namely the weak operator
topology (or the strong operator topology, or the $\sigma$-weak topology for that matter).

Again, we follow a general path, proving what assertions can be made about posets of subalgebras of any kind of algebraic structures, where
its subalgebras are also topologically closed.

For the rest of this section, we only consider Hausdorff topological spaces. This condition is necessary for our proofs to work. The reason is that, in a Hausdorff space, a net $(a_i)_{i \in I}$ converges to at most one point $a$. We shall write this as $(a_i)_{i \in I} \longrightarrow a$.

\begin{defn}
A \emph{topological algebra} is an algebra $\aA = \pair{A;\aF}$ where $A$ is equipped with a Hausdorff topology.
\end{defn}
We single out the substructures of interest:
\begin{defn}
Given a \emph{topological algebra} $\aA = \pair{A;\aF}$, a \emph{closed subalgebra} is
a subset $B$ of $A$ which is simultaneously a subalgebra and a topologically closed set. The set of closed subalgebras of $\aA$ is denoted by $CSub_{\aA}$. Moreover, for a system of polynomial equations $E$ over
 $\aA$, $CSub_{\aA}/E$ denotes $CSub_{\aA} \cap Sub_{\aA}/E$.
 \end{defn}
 
Our goal is to extend the results about $Sub_{\aA}/E$ to $CSub_{\aA}/E$. For this, we need to impose a topological condition
on the behaviour of the operations of $\aA$.
Usually, one requires the algebraic operations to be continuous, which would allow us to prove results regarding completeness which are similar to those in previous subsections. An example would then be the poset of (abelian) closed subgroups of a topological groups. However, multiplication in a von Neumann algebra is not continuous with respect to the weak operator topology, only separately continuous in each argument. Thus, in order to capture the case of $\VN$, we need to weaken the continuity assumption. The following condition will suffice:
 
\begin{defn}
Let $\aA = \pair{A;\aF}$ be a topological algebra and $f \in \aF$ an operation with arity $n$. 
We say that $f$ is \emph{separately continuous} if, for any $k = 1, \ldots, n$ and elements $b_1, \ldots, b_{k-1}, b_{k+1}, \ldots, b_n \in A$,
the function
\begin{align*}
\aA\; & \longrightarrow\; \aA\\
a\; & \longmapsto\; f(b_1, \cdots, b_{k-1}, a, b_{k+1}, \cdots, b_n) 
\end{align*}
is continuous. Equivalently, one can say that for any net $(a_i)_{i \in I}$ such that
$(a_i)_{i \in I} \longrightarrow a$, we have 
\[
(f(b_1, \cdots, b_{k-1}, a_i, b_{k+1}, \cdots, b_n))_{i \in I} \longrightarrow f(b_1, \cdots, b_{k-1}, a, b_{k+1}, \cdots, b_n).
\]
\end{defn}

The fact that we allow a weaker form of continuity than it is costumary on the algebraic operations forces us to
impose a condition on the allowed equations.
\begin{defn}
A polynomial over $\aA$ in the variables $x_1, \cdots, x_n$ is \emph{linear} if each of the variables occurs at most once.
\end{defn}

\begin{lem}\label{polycts}
Let $\aA = \pair{A;\aF}$ be a topological algebra with separately continuous operations,
and $p$ a linear polynomial over $\aA$ in the variables $x_1, \cdots, x_n$. Then the function
\begin{align*}
\widetilde{p} \; :\; & \aA^n \longrightarrow \aA\\
(a_1, \cdots, a_n) & \mapsto \rmean{p}_{\fdef{\nu}{x_{i}}{a_{i}}}
\end{align*}
is separately continuous. 
\end{lem}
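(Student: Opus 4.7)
The plan is to proceed by structural induction on the polynomial $p$ as given by Definition \ref{def:poly}, using the assumption of linearity crucially to ensure that when we fix all but one variable, only one subpolynomial remains non-constant.

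For the base case, $p$ is a variable $x_i$. Then $\widetilde{p}$ is the $i$-th projection $\aA^n \to \aA$, which is continuous in each argument (hence separately continuous).

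For the inductive step, suppose $p = f(p_1,\ldots,p_{ar(f)})$ with each $p_j$ a linear polynomial in a subset $V_j \subseteq \{x_1,\ldots,x_n\}$ of the variables. Linearity of $p$ forces the sets $V_1,\ldots,V_{ar(f)}$ to be pairwise disjoint, and each $p_j$ itself is linear. To show separate continuity in the coordinate $x_k$, fix arbitrary values $b_j \in A$ for all $j \neq k$. There is at most one index $j_0$ such that $x_k \in V_{j_0}$; for $j \neq j_0$, the polynomial $p_j$ does not involve $x_k$, so $\widetilde{p_j}$ evaluated at the fixed $b_j$'s becomes a constant $c_j \in A$ independent of $x_k$. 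By the induction hypothesis applied to $p_{j_0}$ (which is linear and of smaller structural complexity), the map $a \mapsto \widetilde{p_{j_0}}(b_1,\ldots,b_{k-1},a,b_{k+1},\ldots,b_n)$ is continuous in $a$. Then the restriction of $\widetilde{p}$ to the $k$-th coordinate factors as the composition
\[
a \;\longmapsto\; \widetilde{p_{j_0}}(\ldots,a,\ldots) \;\longmapsto\; f(c_1,\ldots,c_{j_0-1},\widetilde{p_{j_0}}(\ldots,a,\ldots),c_{j_0+1},\ldots,c_{ar(f)}),
\]
where the second map is continuous by the separate continuity of $f$. The case where $x_k$ does not appear in any $V_j$ is trivial, as $\widetilde{p}$ is then constant in $x_k$.

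The main subtlety, and the only place where the hypotheses are genuinely used, is in the inductive step: without linearity, two or more of the subpolynomials $\widetilde{p_j}$ could simultaneously depend on $x_k$, and separate continuity of $f$ would no longer suffice to conclude continuity of the composite (we would need joint continuity). Thus linearity is exactly what reduces the problem to a single instance of the separate-continuity hypothesis on $f$ at each level of the recursion.
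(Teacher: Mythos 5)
Your proof is correct and follows essentially the same route as the paper's: structural induction on linear polynomials, with the base case handled by projections and the inductive step using linearity to ensure that only one subpolynomial depends on the varying variable, so that a single application of the separate continuity of $f$ suffices. Your explicit treatment of the case where $x_k$ occurs in no subpolynomial, and your remark on why linearity is indispensable, match the paper's argument in substance.
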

\begin{proof}
The proof goes by induction on (linear) polynomials (refer back to definitions \ref{def:poly} and \ref{def:evaluatepoly}):
\begin{itemize}
\item If $p = x_j$, then 

\begin{equation}
			\widetilde{p}(a_1, \cdots, a_n)\; =\; \rmean{x_j}_{\fdec{\nu}{x_{i}}{a_{i}}} \; =\; \nu(x_j) \; =\; a_j.
\end{equation}
So $\widetilde{p} = \pi_j$, which is clearly separately continuous (either constant or identity when arguments taken separately).
\item If $p = f(p_1, \cdots, p_l)$, assume as the induction hypothesis that  $\widetilde{p_1},  \ldots, \widetilde{p_l}$
are separately continuous. Then
\begin{equation}
			\widetilde{p} \; =\; f \comp \pair{\widetilde{p_1},  \ldots, \widetilde{p_l}}.
\end{equation}
Let us see that this is separately continuous in the first argument $x_1$ (the other arguments can be treated analogously). Let $b_2 \cdots b_n \in A$. Since $p$ is linear, the variable $x_1$ occurs on at most one subpolynomial $p_k$  (with $k=1, \ldots, l$). Without  loss of generality,
say it occurs on $p_1$. Then the functions 
\[
t_k \; =\; a \longmapsto \widetilde{p_k} (a, b_2, \ldots, b_n)
\]
for $k = 2, \cdots, l$ are constant (since $x_1$ does on occur in $p_k$), whereas the function
\[
t_1 \; =\; a \longmapsto \widetilde{p_1} (a, b_2, \ldots, b_n)
\]
is continuous (by the induction hypothesis).

Now, we consider the function $\widetilde{p}$ with only the first argument varying. This is the function $f \comp \pair{t_1, \cdots, t_l}$. But $t_2, \cdots, t_l$ are constant functions, which means that only the first argument of $f$ varies. Since this is given by a continuous function $t_1$ and $f$ is continuous in the first argument (because it is separately continuous), $\fdec{f \comp  \pair{t_1, \cdots, t_l}}{A}{A}$ is continuous, meaning that $\widetilde{p}$ is (separately) continuous in the first argument. For the other arguments, the proof is similar.
\end{itemize}
\end{proof}

We shall denote by $cl(-)$ the (Kuratowski) closure operator associated with the topology of $\aA$.

\begin{lem}\label{closuresubalg}
For a topological algebra $\aA = \pair{A;\aF}$ with separately continuous operations,
\begin{enumerate}
	\item\label{clsub-1} If $B \in Sub_{\aA}$, then $cl(B) \in Sub_{\aA}$.
	\item\label{clsub-2} Given a system of linear polynomial equations $E$ over $\aA$, if $B \in Sub_{\aA}/E$, then $cl(B) \in Sub_{\aA}/E$.
\end{enumerate}
\end{lem}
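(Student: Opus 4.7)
The strategy for both parts is iterated application of separate continuity: we move one argument at a time from $B$ into $cl(B)$, exploiting that every $a \in cl(B)$ is the limit of a net in $B$ and that the relevant operations behave well with respect to such convergence when all other arguments are held fixed. The Hausdorff hypothesis on $\aA$ will do its real work in part (\ref{clsub-2}), ensuring uniqueness of limits so that two termwise-equal convergent nets cannot have distinct limits.

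For part (\ref{clsub-1}), fix an operation $f \in \aF$ of arity $n$ and elements $a_1, \ldots, a_n \in cl(B)$. I would prove by downward induction on $k$, from $k = n$ down to $k = 0$, the auxiliary statement $S_k$: that $f(b_1,\ldots,b_k,a_{k+1},\ldots,a_n) \in cl(B)$ for all $b_1,\ldots,b_k \in B$ and all $a_{k+1},\ldots,a_n \in cl(B)$. The base case $S_n$ is just closure of $B$ under $f$. For the inductive step $S_{k+1} \Rightarrow S_k$, choose a net $(b_{k+1}^{(j)})_j$ in $B$ converging to $a_{k+1}$; by $S_{k+1}$ the net $(f(b_1,\ldots,b_k,b_{k+1}^{(j)},a_{k+2},\ldots,a_n))_j$ lies entirely in $cl(B)$, and by separate continuity of $f$ in its $(k+1)$-th argument it converges to $f(b_1,\ldots,b_k,a_{k+1},\ldots,a_n)$, which therefore lies in $cl(cl(B)) = cl(B)$. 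The conclusion of part (\ref{clsub-1}) is exactly the statement $S_0$.

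Part (\ref{clsub-2}) follows the same template, applied to the polynomial functions $\widetilde{p}$ and $\widetilde{q}$ in place of $f$. The key input is Lemma \ref{polycts}, which provides separate continuity of $\widetilde p$ and $\widetilde q$; this is precisely where linearity of the equations in $E$ becomes indispensable, since without it the induced polynomial functions need not be separately continuous. The inductive hypothesis now asserts equality of $\widetilde p$ and $\widetilde q$ on tuples whose first $k$ entries lie in $B$ and whose remaining entries lie in $cl(B)$, with base case $k = n$ given by $B \in Sub_{\aA}/E$. In the inductive step, the two nets obtained by evaluating $\widetilde p$ and $\widetilde q$ on the approximating tuples are termwise equal, so their (separately continuous) limits must coincide --- this is the one place where the Hausdorff assumption is truly essential, and it is the main subtle point of the whole argument; everything else is careful bookkeeping of the induction.
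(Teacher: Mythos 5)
Your proof is correct and follows essentially the same route as the paper's: approximate one argument at a time by a net in $B$, use separate continuity (via Lemma \ref{polycts} for the polynomial functions) to pass to the limit, and invoke Hausdorffness to identify the limits of the two termwise-equal nets in part (\ref{clsub-2}). The only difference is presentational --- you spell out the general-arity case as an explicit downward induction, where the paper treats the binary case and remarks that the general case follows by the same iteration.
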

\begin{proof}
(\ref{clsub-1}) We need to show that $cl(B)$ is closed under the operations in $\aF$. We consider only the case of a binary operation $f \in \aF$. It will be apparent that the general case follows from a simple inductive argument.

Let $a,b \in cl(B)$. Then there exist nets $(a_i)_{i \in I}$ and $(b_j)_{j \in J}$ consisting of elements of $B$ such that $(a_i)_{i \in I} \longrightarrow a$ and $(b_j)_{j \in J}  \longrightarrow b$. By fixing an index $j$ at a time, we conclude, by separate continuity, that
\begin{equation}
			\forall{j \in J}: (f(a_i, b_j))_{i \in I} \longrightarrow f(a,b_j).
\end{equation}
Since all the elements of the net $(f(a_i, b_j))_{i \in I}$ are in $B$ (for $a_i, b_j \in B$ and $B$ is closed under $f$), all elements of the form $f(a,b_j)$ are in $cl(B)$, as they are limits of nets of elements of $B$. Now, letting $j$ vary again, by separate continuity we obtain
\begin{equation}
			(f(a, b_j))_{j \in J} \longrightarrow f(a,b).
\end{equation}
But $(f(a, b_j))_{j \in J}$ is a net of elements of $cl(B)$, thus $f(a,b) \in cl(cl(B)) = cl(B)$.
This completes the proof that $cl(B)$ is closed under $f$.

(\ref{clsub-2}) The procedure is very similar and again we consider only a polynomial equation $\pair{p,q}$ in two variables $x$ and $y$. We will prove that $cl(B)$ satisfies the equation whenever $B$ does.

Let $a,b \in cl(B)$, with sets $(a_i)_{i \in I}$ and $(b_j)_{j \in J}$ as before.
We use the function $\fdec{\widetilde{p}}{A^2}{A}$ from lemma \ref{polycts}, which is given by 
\begin{equation}
			\widetilde{p}(k_1, k_2) = \rmean{p}_{[x_1\mapsto k_1, x_2\mapsto k_2]}.
\end{equation}
Because of Lemma \ref{polycts}, we can follow the same procedure as in the first part (for the separately continuous function $f$) to conclude
that
\begin{equation}
  \forall{j \in J}: (\widetilde{p}(a_i, b_j))_{i \in I} \longrightarrow \widetilde{p}(a, b_j) \;\;\; \wedge \;\;\;  (\widetilde{q}(a_i, b_j))_{i \in I} \longrightarrow \widetilde{q}(a, b_j).
\end{equation}
But $a_i, b_j \in B$ and $B$ satisfies the equations, so the nets $(\widetilde{p}(a_i, b_j))_{i \in I}$ and $(\widetilde{q}(a_i, b_j))_{i \in I}$ are the same. 
Since we have Hausdorff spaces by assumption, a net has at most one limit. This implies that
\begin{equation}\label{eq:Step1Conc}
			\forall{j \in J}: \widetilde{p}(a, b_j) = \widetilde{q}(a, b_j).
\end{equation}
As before, we take (again by separate continuity of $\widetilde{p}$ and $\widetilde{q}$)
\begin{equation}\label{eq:Step2}
			(\widetilde{p}(a, b_j))_{j \in J} \longrightarrow \widetilde{p}(a, b)
			\;\;\; \wedge \;\;\;
			(\widetilde{q}(a, b_j))_{j \in J} \longrightarrow \widetilde{q}(a, b). 
\end{equation}
Because of \ref{eq:Step1Conc} and \ref{eq:Step2}, the nets are the same again, yielding
\begin{equation}
			\widetilde{p}(a, b) \; =\;   \widetilde{q}(a, b).
\end{equation}
This proves that $cl(B)$ satifies $\pair{p,q}$.
\end{proof}


We now can state our result.
As far as completeness (or convergence) properties are concerned, we have a similar situation as in the non-topological case.
\begin{prop}			\label{Prop_CSub_ACompleteLat}
Let $\aA = \pair{A;\aF}$ be a topological algebra with separately continuous operations. Let $E$ be a system of linear polynomial equations.
Then
\begin{itemize}
\item $CSub_{\aA}$ is a complete lattice.
\item $CSub_{\aA}/E$ is a bounded-complete dcpo.
\end{itemize}
\end{prop}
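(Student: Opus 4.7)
The plan is to assemble the result from Lemma \ref{closuresubalg} together with the non-topological completeness results (Propositions \ref{SubAIsCompleteLattice}, \ref{Prop_UnionOfDirectedSIsSubalg} and \ref{SubEdomain}) already established. The key move throughout is to take the construction from the non-topological case and then apply the closure operator $cl(-)$, invoking Lemma \ref{closuresubalg} to make sure that the resulting set is still a closed subalgebra that satisfies the equations in $E$.

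For the first claim, I would first handle meets: the whole algebra $A$ is a closed subalgebra (the whole space is trivially closed and is clearly a subalgebra), so $CSub_\aA$ has a top element. Moreover, an arbitrary intersection of closed subalgebras is still a subalgebra (intersection of subalgebras) and still closed (intersection of closed sets), so arbitrary meets exist and agree with those in $Sub_\aA$. A poset with all meets and a top element is a complete lattice, with joins given by
\[ \bigvee \aS \;=\; \bigwedge \setdef{B \in CSub_\aA}{\forall S \in \aS: S \subseteq B}. \]
(Alternatively, one can describe the join more explicitly as $cl(\bigvee_{Sub_\aA} \aS)$, which by Lemma \ref{closuresubalg}(\ref{clsub-1}) lies in $CSub_\aA$, and which is easily seen to be the least closed subalgebra containing every element of $\aS$.)

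For the second claim, I would treat directed completeness and bounded completeness separately. For directed completeness, let $\aS \subseteq CSub_\aA/E$ be directed. By Proposition \ref{Prop_UnionOfDirectedSIsSubalg}, $\bigcup \aS$ is already a subalgebra, and the directed-completeness part of the proof of Proposition \ref{SubEdomain} shows that $\bigcup\aS$ still satisfies $E$. Applying Lemma \ref{closuresubalg}(\ref{clsub-1}) and (\ref{clsub-2}), we conclude that $cl(\bigcup \aS)$ is a closed subalgebra satisfying $E$, hence an element of $CSub_\aA/E$. It is clearly an upper bound of $\aS$; if $B \in CSub_\aA/E$ is any other upper bound then $\bigcup \aS \subseteq B$, and since $B$ is topologically closed, $cl(\bigcup \aS) \subseteq B$. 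Thus $cl(\bigcup \aS)$ is the supremum.

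For bounded completeness, suppose $\aS \subseteq CSub_\aA/E$ is bounded above by some $S' \in CSub_\aA/E$. As in the proof of Proposition \ref{SubEdomain}, the generated subalgebra $\langle \bigcup \aS \rangle$ is contained in $S'$ and therefore satisfies $E$ (any valuation into it is a valuation into $S'$). Applying Lemma \ref{closuresubalg} once more, $cl(\langle \bigcup \aS \rangle)$ is a closed subalgebra satisfying $E$, and since $S'$ is closed, $cl(\langle \bigcup \aS \rangle) \subseteq S'$, so this construction does not leave the bounded region. The same argument as above shows it is the least upper bound of $\aS$ in $CSub_\aA/E$. The one place where care is genuinely needed is the invocation of Lemma \ref{closuresubalg}(\ref{clsub-2}): this is where the linearity hypothesis on the equations in $E$ is essential, since separate continuity of operations is only strong enough to keep \emph{linear} equations preserved under closure. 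Everything else is routine bookkeeping with the closure operator and the earlier completeness lemmas.
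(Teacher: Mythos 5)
Your proof is correct and follows essentially the same route as the paper: take the suprema from the non-topological results (Propositions \ref{Prop_UnionOfDirectedSIsSubalg} and \ref{SubEdomain}) and apply the closure operator, using Lemma \ref{closuresubalg} to stay inside $CSub_{\aA}$ resp. $CSub_{\aA}/E$. The only cosmetic differences are that you establish the complete-lattice claim via arbitrary intersections and a top element where the paper argues directly with least closed subalgebras $cl(\pair{\bigcup S})$, and that for bounded completeness the paper deduces satisfaction of $E$ from $cl(\pair{\bigcup S})\subseteq C$ by downward closedness rather than re-invoking Lemma \ref{closuresubalg}(\ref{clsub-2}).
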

\begin{proof}
  If $S$ is a set of closed subalgebras, then, by \ref{closuresubalg}-\ref{clsub-1}, there is a least closed subalgebra containing $S$, given by $cl(\pair{\bigcup S})$. This proves the first claim.

Now, suppose that all $B \in S$ satisfy $E$ and that $S$ is directed. By directedness, the supremum is given by 
$cl(\pair{\bigcup S}) = cl(\bigcup S)$. By Lemma \ref{SubEdomain}, $\bigcup S$ satisfies the equations $E$. Then, by \ref{closuresubalg}-\ref{clsub-2}, $cl(\bigcup S)$ also does.

Similarly, suppose all $B\in S$ satisfy $E$ and that $S$ is bounded above by $C \in CSub_{\aA}/E$. Then the supremum of $S$ in $CSub_{\aA}$, namely $cl(\pair{\bigcup S})$, is a closed subalgebra smaller than $C$. Therefore, $cl(\pair{\bigcup S})$ must also satisfy $E$, hence it is a supremum for $S$ in $CSub_{\aA}/E$.
\end{proof}

\medskip
\paragraph{$\VN$ as a dcpo.} The results above apply easily to the situation we are interested in, namely the poset of abelian von Neumann subalgebras of a nonabelian von Neumann algebra $\N$. We can describe $\N$ as the algebra\footnote{Algebra here has two different meanings. The first is the (traditional) meaning, of which the $*$-algebras, $C^{*}$-algebras and von Neumann are special cases. When we say just algebra, though, we mean it in the sense of universal algebra as in the previous two sections.}
\begin{equation}
  \pair{\N; \one, +, \comp_{c}, \times, *}
\end{equation}
whose operations have arities $0$, $2$, $1$, $2$ and $1$, respectively.  The operations are just the usual $*$-algebra operations. Note that $ \comp_{c}$ (scalar multiplication by  $c \in \Cc$) is in fact an uncountable set of operations indexed by $\Cc$. Note that the inclusion of $\one$ as an operation allows to restrict our attention to unital subalgebras with the same unit and to unital homomorphisms. Recall that a $C^*$-subalgebra of $\N$ is a $*$-subalgebra that is closed with respect to the norm topology. A von Neumann subalgebra is a $*$-subalgebra that is closed with respect to the weak operator topology. Since both these topologies are Hausdorff and the operations are separately continuous with respect to both, the results from the previous sections apply.

\begin{prop}
Let $\N$ be a von Neumann algebra. Then
\begin{itemize}
	\item the set of unital $*$-subalgebras (respectively, $C^{*}$-subalgebras, and von Neumann subalgebras) of $\N$ with the same unit element as $\N$ is a complete lattice.
\item The set of abelian unital $*$-subalgebras (respectively, $C^{*}$-subalgebras, and von Neumann subalgebras) of $\N$ with the same unit element as $\N$ is a bounded complete dcpo (or complete semillatice).
\end{itemize}
\end{prop}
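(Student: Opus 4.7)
The plan is to realise this proposition as a direct consequence of the general framework developed in Propositions \ref{SubAIsCompleteLattice}, \ref{SubEdomain}, and \ref{Prop_CSub_ACompleteLat}, applied to $\N$ viewed as a topological universal algebra in the signature $\Sigma=\{\one,+,\{\comp_c\}_{c\in\bbC},\times,*\}$ of arities $0,2,1,2,1$. Because $\one$ enters as a nullary operation, a subalgebra in the universal-algebraic sense automatically contains the unit of $\N$, so the identification with unital subalgebras sharing the unit of $\N$ is built into the setup. For the $C^{*}$- and von Neumann variants I would equip $\N$ with the norm topology and the weak operator topology respectively, so that $CSub_\N$ in each topology matches exactly the set of subalgebras appearing in the statement.

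For the first bullet, the $*$-subalgebra case follows immediately from Proposition \ref{SubAIsCompleteLattice}. For the topological cases, Proposition \ref{Prop_CSub_ACompleteLat} applies provided each topology is Hausdorff and every operation is separately continuous. Both topologies are Hausdorff, and in the norm topology all operations are in fact jointly continuous; in the WOT, addition, involution, and each scalar multiplication $\comp_c$ are continuous, while multiplication is only WOT-separately continuous, as seen from the identities $\langle ABx,y\rangle=\langle Ax,B^*y\rangle$ (for fixed $B$) and $\langle ABx,y\rangle=\langle Bx,A^*y\rangle$ (for fixed $A$). This is precisely the situation Proposition \ref{Prop_CSub_ACompleteLat} was crafted to handle, so the complete lattice structure follows in each of the three cases.

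For the second bullet, the key observation is that commutativity can be encoded by the single polynomial equation $\langle x\times y,\,y\times x\rangle$, in which each variable occurs exactly once on each side and is hence linear in the sense required by Lemma \ref{polycts}. Letting $E$ be the singleton system containing this equation, $Sub_\N/E$, $CSub_\N/E$ (norm topology), and $CSub_\N/E$ (WOT) are respectively the abelian unital $*$-, $C^{*}$-, and von Neumann subalgebras of $\N$. Propositions \ref{SubEdomain} and \ref{Prop_CSub_ACompleteLat} then deliver the bounded-complete dcpo structure in all three cases. The one genuinely non-formal point, and the step I expect to require the most care, is verifying separate WOT-continuity of multiplication: this is the feature that forced the general framework to weaken joint to separate continuity in the first place, and it is precisely why the linearity of the commutativity equation is essential, since it is what enables Lemma \ref{polycts} (and thence Lemma \ref{closuresubalg}) to apply in the WOT case.
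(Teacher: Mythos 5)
Your proposal is correct and takes essentially the same route as the paper: the authors likewise present $\N$ as the topological universal algebra $\pair{\N;\one,+,\comp_c,\times,*}$ (with the nullary $\one$ enforcing unitality), observe that the norm and weak operator topologies are Hausdorff with separately continuous operations, encode commutativity by the linear equation $xy=yx$, and invoke the general results on $Sub_{\aA}$, $Sub_{\aA}/E$ and $CSub_{\aA}/E$. The only blemish is a slip in your first inner-product identity: separate WOT-continuity of multiplication follows from $\langle ABx,y\rangle=\langle A(Bx),y\rangle$ and $\langle BAx,y\rangle=\langle Ax,B^{*}y\rangle$ for fixed $B$, not from $\langle ABx,y\rangle=\langle Ax,B^{*}y\rangle$, which would wrongly equate $AB$ with $BA$.
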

The set of abelian von Neumann subalgebras of $\N$ with the same unit element as $\N$ is simply $\VN$. Hence, we have shown that $\VN$ is a dcpo. Moreover, it is clear that $\VN$ is bounded complete if we include the trivial algebra as the bottom element. Note that this holds for \emph{all} types of von Neumann algebras. Furthermore, our proof shows that analogous results hold for the poset of abelian $C*$-subalgebras of unital $C^*$-algebras (and other kinds of topological algebras and equations, such as associative closed Jordan subalgebras, or closed subgroups of a topological group).

As already remarked at the beginning of this subsection on topologically closed subalgebras, $\VN$ is an algebraic domain in the case that $\N$ is a finite-dimensional algebra (finite direct sum of matrix algebras over $\Cc$). This can be seen directly from the fact that, in the finite-dimensional situation, any subalgebra is weakly (resp. norm) closed. Hence, the topological aspects do not play a r\^ole and proposition \ref{SubEdomain} applies. The result that $\VN$ is an algebraic domain if $\N$ is a finite-dimensional matrix algebra can also be deduced from the fact that $\VN$ has finite height in this case. This clearly means that any directed set has a maximal element and, consequently, the way-below relation coincides with the order relation, thus $\VN$ is trivially an algebraic domain.

One can also see directly that the context category $\VN$ is a dcpo for an arbitrary von Neumann algebra $\N$: let $S\subset\VN$ be a directed subset, and let
\begin{equation}
			A:=\pair{\bigcup_{V\in S} V}
\end{equation}
be the algebra generated by the elements of $S$. Clearly, $A$ is a self-adjoint abelian algebra that contains $\hat 1$, but not a von Neumann algebra in general, since it need not be weakly closed. By von Neumann's double commutant theorem, the weak closure $\tilde V:=\overline{A}^w$ of $A$ is given by the double commutant $A''$ of $A$. The double commutant construction preserves commutativity, so $A''=\tilde V$ is an \emph{abelian} von Neumann subalgebra of $\N$, namely the smallest von Neumann algebra containing all the $V\in S$, so
\begin{equation}
			\tilde V=\dsup S.
\end{equation}
Hence, every directed subset $S$ has a supremum in $\VN$.
\medskip

\paragraph{A functor from von Neumann algebras to dcpos.} We have seen that to each von Neumann algebra $\N$, we can associate a dcpo $\VN$. We will now show that this assignment is functorial. Again, this result arises as a special case of a more general proposition,
which is quite easy to show.

\begin{defn}
  Let $\mathcal{A}$ and $\mathcal{B}$ be two algebras with the same signature $\Sigma$ (that is, they have the same set of operations).
  A function $\fdec{\phi}{A}{B}$ is called a \emph{homomorphism} if it preserves every operation. That is,
for each operation $f$ of arity $n$,
\begin{equation}
			\phi(f^{\mathcal{A}}(a_1,\ldots,a_n)) = f^{\mathcal{B}}(\phi(a_1),\ldots,\phi(a_n)).
\end{equation}
\end{defn}

\begin{lem}\label{MorphismsScott}
Let $\mathcal{A}$ and $\mathcal{B}$ be topological algebras and $\fdec{\phi}{A}{B}$ a continuous homomorphism that preserves closed subalgebras. Then, for any $E$, the function
\begin{align}
    \phi^\rightarrow|_{CSub_\mathcal{A}/E} : CSub_\mathcal{A}/E &\longrightarrow CSub_\mathcal{B}/E\\			\nonumber
    S &\longmapsto \phi^\rightarrow{S}
\end{align}
is a Scott-continuous function.
\end{lem}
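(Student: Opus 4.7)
The plan is to verify, in turn, that $\phi^\rightarrow|_{CSub_\aA/E}$ is well-defined, monotone, and preserves directed suprema; the latter two together being the definition of Scott-continuity.

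For well-definedness, let $S\in CSub_\aA/E$. That $\phi^\rightarrow(S)$ is a subalgebra of $\aB$ follows from $\phi$ being a homomorphism: for any $f\in\Sigma$ and $b_1,\ldots,b_n\in\phi^\rightarrow(S)$, picking pre-images $a_i\in S$ with $\phi(a_i)=b_i$ gives $f^\aB(b_1,\ldots,b_n) = \phi(f^\aA(a_1,\ldots,a_n))\in\phi^\rightarrow(S)$. That $\phi^\rightarrow(S)$ is closed in $\aB$ is precisely the hypothesis that $\phi$ preserves closed subalgebras. For the equations, an easy induction on polynomials extends the homomorphism property to $\widetilde{p}(\phi(a_1),\ldots,\phi(a_n)) = \phi(\widetilde{p}(a_1,\ldots,a_n))$ for every $p\in\Sigma[x_1,\ldots,x_n]$; thus given $\pair{p,q}\in E$ and a valuation $\nu$ on $\phi^\rightarrow(S)$, lifting $\nu$ via pre-images to a valuation $\mu$ on $S$ yields $\widetilde{p}\circ\nu = \phi\circ\widetilde{p}\circ\mu = \phi\circ\widetilde{q}\circ\mu = \widetilde{q}\circ\nu$, using that $S$ satisfies $\pair{p,q}$.

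Monotonicity is immediate: $S\subseteq T$ implies $\phi^\rightarrow(S)\subseteq\phi^\rightarrow(T)$. For preservation of directed suprema, let $\aS\subseteq CSub_\aA/E$ be directed. By (the proof of) Proposition \ref{Prop_CSub_ACompleteLat}, $\dsup\aS = cl(\bigcup\aS)$; by monotonicity, $\phi^\rightarrow(\aS)$ is directed in $CSub_\aB/E$ with supremum $cl(\bigcup\phi^\rightarrow(\aS))$. It therefore suffices to establish
\[
    \phi^\rightarrow\bigl(cl(\textstyle\bigcup\aS)\bigr) \;=\; cl\bigl(\textstyle\bigcup\phi^\rightarrow(\aS)\bigr).
\]
The inclusion $\subseteq$ is the standard topological fact that continuous maps satisfy $\phi^\rightarrow(cl(X))\subseteq cl(\phi^\rightarrow(X))$, applied with $X=\bigcup\aS$ and using $\phi^\rightarrow(\bigcup\aS) = \bigcup\phi^\rightarrow(\aS)$. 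Conversely, by the well-definedness step, $\phi^\rightarrow(cl(\bigcup\aS))$ is a closed subset of $\aB$, and it trivially contains $\bigcup\phi^\rightarrow(\aS)$, so it also contains the closure of the latter.

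The only real subtlety is keeping track of which hypothesis does which job: continuity of $\phi$ gives the $\subseteq$ direction in the display above, whereas preservation of closed subalgebras---a genuine assumption beyond continuity and being a homomorphism in this setting (e.g. for weak-operator-continuous $*$-homomorphisms of von Neumann algebras)---is needed both for well-definedness and to ensure the left-hand side of the display is already closed, so that it can bound the closure on the right. Once both hypotheses are deployed, nothing further is required beyond routine manipulation of direct images, closures, and the formula for directed suprema from Proposition \ref{Prop_CSub_ACompleteLat}.
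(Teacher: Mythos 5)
Your proof is correct and follows essentially the same route as the paper's: well-definedness from the hypotheses that $\phi$ is a homomorphism and preserves closed subalgebras, trivial monotonicity, and the inclusion $\phi^\rightarrow\bigl(cl(\bigcup\aS)\bigr)\subseteq cl\bigl(\bigcup\phi^\rightarrow(\aS)\bigr)$ deduced from continuity of $\phi$ together with $\phi^\rightarrow$ commuting with unions. The only (harmless) difference is that you also establish the reverse inclusion explicitly, via closedness of $\phi^\rightarrow\bigl(cl(\bigcup\aS)\bigr)$, whereas the paper proves only the forward inclusion and leaves the converse implicit, since for any monotone map $\dsup \phi^\rightarrow(\aS)\sqsubseteq\phi^\rightarrow(\dsup\aS)$ holds automatically.
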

\begin{proof}
  First, we check that the function is well-defined, that is, each element of the domain determines one element in the codomain. Let $S \in  CSub_\mathcal{A}/E$. Then $S':=\phi^\rightarrow(S)$ is in $CSub_\mathcal{B}$ by the assumption that $\phi$ preserves closed subalgebras. The homomorphism condition implies that it also preserves the satisfaction of equations $E$.

  As for the result itself, note that monotonicity is trivial. So we only need to show that this map preserves directed joins. We know that the directed join is given as $\dsup D = cl(\bigcup D)$. Thus, we want to prove that for any directed set $D$
\begin{equation*}
  \phi^\rightarrow(cl(\bigcup_{S \in D}S))
  \subseteq
  cl(\bigcup_{S \in D}\phi^\rightarrow(S))
\end{equation*}
This follows easily since $f^\rightarrow$ commutes with union of sets and the inequality
$\phi^\rightarrow{cl(S)} \subseteq cl(\phi^\rightarrow{S})$
is simply the continuity of $\phi$.
\end{proof}

We can combine the previous lemma with Prop. \ref{Prop_CSub_ACompleteLat} as follows:
\begin{prop}
Let $\mathcal{C}$ be a category such that
\begin{itemize}
  \item objects of $\mc C$ are topological $\Sigma$-algebras (for a fixed signature $\Sigma$) that fulfil the conditions stated in Prop. \ref{Prop_CSub_ACompleteLat},
  \item morphisms of $\mc C$ are continuous homomorphisms preserving closed subalgebras.
\end{itemize}
Then, for any system of linear polynomial equations $E$, the assignments 
\begin{align}
			A &\longmapsto CSub_A/E\\
			(\fdec{\phi}{A}{B}) &\longmapsto (\fdec{\phi^\rightarrow|_{CSub_A/E}}{CSub_A/E}{CSub_B/E})
\end{align}
define a functor $\fdec{F_E}{\mathcal{C}}{\sf{DCPO}}$ from $\mathcal{C}$ to the category of dcpos and Scott-continuous functions.
\end{prop}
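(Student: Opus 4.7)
The plan is essentially a bookkeeping exercise assembling the two preceding results. First, I would observe that the object assignment $A \mapsto CSub_A/E$ lands in $\mathsf{DCPO}$ by Proposition \ref{Prop_CSub_ACompleteLat}: the hypotheses on objects of $\mc C$ (topological $\Sigma$-algebras with separately continuous operations) together with the linearity of $E$ are exactly those required there, so each $CSub_A/E$ is indeed a bounded-complete dcpo. Second, the morphism assignment $\phi \mapsto \phi^\rightarrow|_{CSub_A/E}$ lands in the morphisms of $\mathsf{DCPO}$ by Lemma \ref{MorphismsScott}, whose two hypotheses (continuity of $\phi$ and preservation of closed subalgebras by $\phi$) are precisely the conditions imposed on morphisms of $\mc C$. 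That lemma simultaneously delivers that $\phi^\rightarrow(S) \in CSub_B/E$ whenever $S \in CSub_A/E$ and that the resulting restricted map is Scott-continuous, so well-definedness of $F_E$ on morphisms is settled.

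With well-definedness in place, it remains to check the two functoriality axioms. Preservation of identities is immediate: for $S \in CSub_A/E$, one has $(\id_A)^\rightarrow(S) = \{\id_A(a) \mid a \in S\} = S$, so $(\id_A)^\rightarrow|_{CSub_A/E} = \id_{CSub_A/E}$. Preservation of composition follows from the standard identity for set-theoretic direct images: for morphisms $\phi: A \to B$ and $\psi: B \to C$ in $\mc C$ and any $S \subseteq A$,
\begin{equation*}
(\psi \circ \phi)^\rightarrow(S) \;=\; \{\psi(\phi(a)) \mid a \in S\} \;=\; \psi^\rightarrow(\phi^\rightarrow(S)).
\end{equation*}
Restricting both sides to $CSub_A/E$ gives $F_E(\psi \circ \phi) = F_E(\psi) \circ F_E(\phi)$, as required.

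There is no genuine technical obstacle at this stage; all the substantive content has already been absorbed into Proposition \ref{Prop_CSub_ACompleteLat} (ensuring the correct codomain category) and Lemma \ref{MorphismsScott} (ensuring Scott-continuity and well-definedness of the morphism part). The only thing to be attentive to is tracking the chain of restrictions — in particular, that closedness and satisfaction of $E$ are preserved at every step — but these are guaranteed by the hypotheses defining $\mc C$ and by the linearity of $E$.
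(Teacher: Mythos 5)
Your proposal is correct and follows exactly the paper's argument: well-definedness on objects and morphisms is delegated to Proposition \ref{Prop_CSub_ACompleteLat} and Lemma \ref{MorphismsScott}, and functoriality reduces to the standard identities $\id_A^\rightarrow = \id$ and $(\psi\circ\phi)^\rightarrow = \psi^\rightarrow\circ\phi^\rightarrow$ for direct images. You simply spell out these steps in slightly more detail than the paper does.
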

\begin{proof}
  \ref{Prop_CSub_ACompleteLat} and \ref{MorphismsScott} imply that this is a well-defined between the categories. Functoriality is immediate from the properties of images $f^\rightarrow$ of a function $f$: 
  $id_A^\rightarrow = id_{\mathcal{P} A}$ and $f^\rightarrow \circ g^\rightarrow = (f \circ g)^\rightarrow$.
\end{proof}

This result can be applied to our situation. Let $\N$ be a von Neumann algebra. Recall that a von Neumann subalgebra is a $*$-subalgebra closed with respect to the $\sigma$-weak topology.\footnote{Equivalently, a von Neumann algebra is closed in the weak operator topology, as mentioned before.} This topology is Hausdorff, and $\sigma$-weakly continuous (or normal) $*$-homomorphisms map von Neumann subalgebras to von Neumann subalgebras. Hence, we get
\begin{thm}
The assignments
\begin{align}
  	\nonumber	\mc V:\sf{vNAlg} &\lra \sf{DCPO}\\
			\N &\lmt \VN\\
			\phi &\lmt \phi^\ra|_{\VN}
\end{align}
define a functor from the category $\sf{vNAlg}$ of von Neumann algebras and $\sigma$-weakly continuous, unital algebra homomorphisms to the category $\sf{DCPO}$ of dcpos and Scott-continuous functions.
\end{thm}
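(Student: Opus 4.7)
The strategy is to exhibit the theorem as a direct application of the preceding general functoriality proposition, with the only genuine work being the verification of its hypotheses in the concrete setting of von Neumann algebras and normal $*$-homomorphisms.

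First I would model every von Neumann algebra $\N$ as a topological $\Sigma$-algebra with signature $\Sigma = \{\hat 1, +, \{\comp_c\}_{c\in\Cc}, \times, *\}$ of arities $0,2,1,2,1$, equipped with its $\sigma$-weak topology, which is Hausdorff. The operations $+$, $*$ and each $\comp_c$ are jointly $\sigma$-weakly continuous, while multiplication is only separately so --- precisely the weakening accommodated by the earlier framework. Abelianness of a subalgebra is expressed by the single polynomial equation $\pair{xy, yx}$ in two variables, and since each of $x$ and $y$ appears at most once on each side, this equation is linear in the sense of the preceding subsection. Including $\hat 1$ as a $0$-ary operation in $\Sigma$ forces subalgebras to contain the unit $\hat 1_\N$ and forces homomorphisms to preserve it, so that $CSub_\N / \{xy = yx\}$ coincides with $\VN$ as defined in the paper. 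By Prop.~\ref{Prop_CSub_ACompleteLat}, this identifies $\VN$ as a bounded-complete dcpo, which gives the object part of the functor.

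Second, on morphisms, a unital $\sigma$-weakly continuous $*$-homomorphism $\phi:\N\to\mc M$ is by definition a continuous $\Sigma$-homomorphism. To apply Lemma~\ref{MorphismsScott}, it remains to verify that $\phi$ preserves closed subalgebras, i.e.\ that $\phi^\ra(V)$ is $\sigma$-weakly closed in $\mc M$ for every $V \in \VN$. This is the main (and really the only nontrivial) obstacle: it is the classical fact that the image of a normal unital $*$-homomorphism between von Neumann algebras is itself weakly closed. I would either cite this directly (e.g.\ from Takesaki) or sketch the standard proof, which decomposes $\phi$ as an amplification followed by a spatial isomorphism and a cut-down by a central projection of $\mc M$; each of these operations visibly sends abelian von Neumann subalgebras to abelian von Neumann subalgebras. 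Granting this, $\phi^\ra(V)$ is a $\sigma$-weakly closed abelian unital $*$-subalgebra of $\mc M$ with the same unit, hence an element of $\mc V(\mc M)$.

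With both the object part and the morphism part verified, Lemma~\ref{MorphismsScott} together with the general functoriality proposition apply and yield the functor $\mc V : \sf{vNAlg} \to \sf{DCPO}$. Preservation of identities and composition under $\mc V$ follows automatically from the corresponding properties of the set-theoretic direct image $\phi \mapsto \phi^\ra$, completing the theorem.
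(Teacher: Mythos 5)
Your proposal is correct and follows essentially the same route as the paper: the paper likewise presents the theorem as an instance of the general functoriality proposition, noting that the $\sigma$-weak topology is Hausdorff, that the $*$-algebra operations are (separately) $\sigma$-weakly continuous, that abelianness is the linear equation $xy=yx$, and that normal unital $*$-homomorphisms map von Neumann subalgebras to von Neumann subalgebras. The only difference is one of detail: you explicitly flag and sketch a justification for the classical fact that images of normal $*$-homomorphisms are weakly closed, which the paper simply cites.
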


It is easy to see that the functor $\mc V$ is not full. It is currently an open question whether $\mc V$ is faithful.

\section{The context category $\VN$ in infinite dimensions -- lack of continuity}			\label{Sec_VNNoDomainInInfDim}
\paragraph{The case of $\N=\BH$ with infinite-dimensional $\Hi$.} We now show that for an infinite-dimensional Hilbert space $\Hi$ and the von Neumann algebra $\N=\BH$, the poset $\VN$ is \emph{not} continuous, and hence not a domain. This counterexample is due to Nadish de Silva, whom we thank.

Let $\Hi$ be a separable, infinite-dimensional Hilbert space, and let $(e_i)_{i\in\bbN}$ be an orthonormal basis. For each $i\in\bbN$, let $\hP_i$ be the projection onto the ray $\bbC e_i$. This is a countable set of pairwise-orthogonal projections.

Let $\hP_e:=\sum_{i=1}^\infty\hP_{2i}$, that is, $\hP_e$ is the sum of all the $\hP_i$ with even indices. Let $V_e:=\{\hP_e,\hat 1\}''=\bbC\hP_e+\bbC\hat 1$ be the abelian von Neumann algebra generated by $\hP_e$ and $\hat 1$. Moreover, let $V_m$ be the (maximal) abelian von Neumann subalgebra generated by all the $\hP_i$, $i\in\bbN$. Clearly, we have $V_e\subset V_m$.

Now let $V_j:=\{\hP_1,...,\hP_j,\hat 1\}''=\bbC\hP_1+...+\bbC\hP_j+\bbC\hat 1$ be the abelian von Neumann algebra generated by $\hat 1$ and the first $j$ projections in the orthonormal basis, where $j\in\bbN$. The algebras $V_j$ form a directed set (actually a chain) $(V_j)_{j\in\bbN}$ whose directed join is $V_m$, obviously. But note that none of the algebras $V_j$ contains $V_e$, since the projection $\hP_e$ is not contained in any of the $V_j$. Hence, we have the situation that $V_e$ is contained in $\dsup (V_j)_{j\in\bbN} = V_m$, but $V_e\nsubseteq V_j$ for any element $V_j$ of the directed set, so $V_e$ is not way below itself.

Since $V_e$ is an atom in the poset $\VN$ (i.e. only the bottom element, the trivial algebra, is below $V_e$ in the poset), we have that $\ddown V_e=\{\bot\} = \{\bbC\hat 1\}$, so $\dsup\ddown V_e = \dsup\{\bot\} = \bot \neq V_e$, which implies that $\VN$ is not continuous.
\medskip
\paragraph{Other types of von Neumann algebras}
The proof that $\VN=\mc V(\BH)$ is not a continuous poset if $\Hi$ is infinite-dimensional can be generalised to other types of von Neumann algebras. ($\BH$ is a type $I_\infty$ factor.) We will use the well-known fact that an abelian von Neumann algebra $V$ is of the form
\begin{equation}
			V\simeq \ell^\infty(X,\mu)
\end{equation}
for some measure space $(X,\mu)$. It is known that the following cases can occur: $X=\{1,...,n\}$, for each $n\in\bbN$, or $X=\bbN$, equipped with the counting measure $\mu_c$; or $X=[0,1]$, equipped with the Lebesgue measure $\mu$; or the combinations $X=\{1,...,n\}\sqcup [0,1]$ and $X=\bbN\sqcup [0,1]$. 

A maximal abelian subalgebra $V_m$ of $\BH$ generated by the projections onto the basis vectors of an orthonormal basis $(e_i)_{i\in\bbN}$ of an infinite-dimensional, separable Hilbert space $\Hi$ corresponds to the case of $X=\bbN$, so $V_m\simeq\ell^\infty(\bbN)$. The proof that in this case $\mc V(\BH)$ is not continuous is based on the fact that $X=\bbN$ can be decomposed into a countable family $(S_i)_{i\in\bbN}$ of measurable, pairwise disjoint subsets with $\bigcup_i S_i=X$ such that $\mu(S_i)>0$ for each $i$. Of course, the sets $S_i$ can just be taken to be the singletons $\{i\}$, for $i\in\bbN$, in this case. These are measurable and have measure greater zero since the counting measure on $X=\bbN$ is used.

Yet, also for $X=([0,1],\mu)$ we can find a countable family $(S_i)_{i\in\bbN}$ of measurable, pairwise disjoint subsets with $\bigcup_i S_i=X$ such that $\mu(S_i)>0$ for each $i$: for example, take $S_1:=[0,\frac{1}{2})$, $S_2:=[\frac{1}{2},\frac{3}{4})$, $S_3:=[\frac{3}{4},\frac{7}{8})$, etc. Each measurable subset $S_i$ corresponds to a projection $\hP_i$ in the abelian von Neumann algebra $V\simeq\ell^\infty([0,1],\mu)$, so we obtain a countable family $(\hP_i)_{i\in\bbN}$ of pairwise orthogonal projections with sum $\hat 1$ in $V$. Let $\hP_e:=\sum_{i=1}^\infty \hP_{2i}$, and let $V_e:=\{\hP_e,\hat 1\}''=\bbC\hP_e+\bbC\hat 1$.

Moreover, let $V_j:=\{\hP_1,...,\hP_j,\hat 1\}''$ be the abelian von Neumann algebra generated by $\hat 1$ and the first $j$ projections, where $j\in\bbN$. Clearly, $(V_j)_{j\in\bbN}$ is a directed set, and none of the algebras $V_j$, $j\in\bbN$, contains the algebra $V_e$, since $\hP_e\notin V_j$ for all $j$. Let $V:=\dsup V_j$ be the abelian von Neumann algebra generated by all the $V_j$. Then we have $V_e\subset V$, so $V_e$ is not way below itself and $\dsup\ddown V_e\neq V_e$.

This implies that $\VN$ is not continuous if $\N$ contains an abelian subalgebra of the form $V\simeq\ell^\infty([0,1],\mu)$. Clearly, analogous results hold if $\N$ contains any abelian subalgebra of the form $V=\ell^\infty(\{1,...,n\}\sqcup[0,1])$ or $V\simeq\ell^\infty(\bbN\sqcup[0,1])$ -- in none of these cases is $\VN$ a domain.

A von Neumann algebra $\N$ contains only abelian subalgebras of the form $V\simeq\ell^\infty(\{1,...,n\},\mu_c)$ if and only if $\N$ is either (a) a finite-dimensional matrix algebra $M_n(\bbC)\otimes\mc C$ with $\mc C$ an abelian von Neumann algebra (i.e., $\N$ is a type $I_n$-algebra) such that the center $\mc C$ does not contain countably many pairwise orthogonal projections,\footnote{This of course means that $\mc C\simeq\ell^\infty(\{1,...,k\},\mu_c)$, that is, $\mc C$ is isomorphic to a finite-dimensional algebra of diagonal matrices with complex entries.} or (b) a finite direct sum of such matrix algebras. Equivalently, $\N$ is a \emph{finite-dimensional von Neumann algebra}, that is, $\N$ is represented (faithfully) on a finite-dimensional Hilbert space. We have shown:

\begin{thm}
The context category $\VN$, that is, the partially ordered set of abelian von Neumann subalgebras of a von Neumann algebra $\N$ which share the unit element with $\N$, is a continuous poset and hence a domain if and only if $\N$ is finite-dimensional.
\end{thm}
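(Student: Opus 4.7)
The plan is to assemble this theorem from the two preceding sections. For the $(\Leftarrow)$ direction, suppose $\N$ is finite-dimensional, hence a finite direct sum of matrix algebras over $\bbC$. As noted at the end of Section \ref{Sec_V(N)AsDomain}, in this case every abelian $*$-subalgebra of $\N$ is automatically weakly closed, and the poset $\VN$ has finite height. A poset of finite height is trivially an algebraic domain, since every directed subset is eventually constant, the way-below relation coincides with the partial order, and $\dsup\ddown V = V$ holds for each $V \in \VN$.

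For the $(\Rightarrow)$ direction, I argue the contrapositive: assuming $\N$ is infinite-dimensional, I exhibit an element of $\VN$ that is not the directed supremum of its way-below set. The counterexample from Section \ref{Sec_VNNoDomainInInfDim} applies whenever $\N$ contains an abelian von Neumann subalgebra $V$ equipped with a countable family $(\hP_i)_{i\in\bbN}$ of pairwise orthogonal non-zero projections summing to $\hat 1$. Given such a family, set $\hP_e := \sum_{i=1}^\infty \hP_{2i}$ and $V_e := \{\hP_e, \hat 1\}''$. The chain $V_j := \{\hP_1,\ldots,\hP_j,\hat 1\}''$ is directed and its directed supremum strictly contains $V_e$, yet no $V_j$ contains $V_e$ since $\hP_e \notin V_j$. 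Hence $V_e$ is not way below itself, and because $V_e$ is an atom of $\VN$ one concludes $\dsup\ddown V_e = \bot \neq V_e$, so $\VN$ fails continuity.

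What remains is the classification step: every infinite-dimensional von Neumann algebra contains an abelian subalgebra of the required kind. By the structure theorem recalled in Section \ref{Sec_VNNoDomainInInfDim}, each $V \in \VN$ is isomorphic to $\ell^\infty(X,\mu)$ with $X$ a finite set, $\bbN$, $[0,1]$, or a disjoint union of these. If every maximal abelian $V \in \VN$ had $X$ finite, then $\N$ would itself be finite-dimensional (a finite direct sum of matrix algebras with finite-dimensional centres), contradicting our assumption. Hence some $V \in \VN$ is of infinite $\ell^\infty$-type, and in both infinite cases $X = \bbN$ and $X = [0,1]$ the explicit constructions at the start of Section \ref{Sec_VNNoDomainInInfDim} produce a countable orthogonal family of projections with sum $\hat 1$ inside $V$.

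The main obstacle I anticipate is making this classification step fully rigorous without invoking the entire type decomposition of von Neumann algebras. The cleanest route is to observe that an infinite-dimensional von Neumann algebra always contains a countably infinite family of pairwise orthogonal non-zero projections --- either in its centre, or, if the centre is finite-dimensional, inside a type $I_\infty$, $II$, or $III$ summand --- and to check that such a family generates an abelian subalgebra $V \in \VN$ of the required form. This reduces the converse direction directly to the explicit counterexample built in the preceding section.
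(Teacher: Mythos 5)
Your proposal is correct and follows essentially the same route as the paper: the finite-dimensional direction via the finite height of $\VN$ (so that way-below coincides with the order), and the converse via the atom $V_e=\{\hP_e,\hat 1\}''$ built from a countable orthogonal family of projections inside a suitable abelian subalgebra, combined with the $\ell^\infty(X,\mu)$ classification of abelian von Neumann algebras. The classification step you flag as the main remaining obstacle is likewise only asserted, not proved in detail, in the paper itself.
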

\medskip
\section{Outlook: Space and Space-time from Interval Domains?}			\label{Sec_Outlook}
There are many further aspects of the link between the topos approach and domain theory to be explored. In this brief section, we indicate some very early ideas about an application to concepts of space and space-time.\footnote{The reader not interested in speculation may well skip this section!} The ideas partly go back to discussions with Chris Isham, but we take full responsibility.

Let us take seriously the idea suggested by the topos approach that in quantum theory it is not the real numbers $\bbR$ where physical quantities take their values, but rather the presheaf $\Rlr$. This applies in particular to the physical quantity position. We know from ordinary quantum theory that depending on its state, a quantum particle will be localised more or less well. (Strictly speaking, there are no eigenstates of position, so a particle can never be localised perfectly.) The idea is that we can now describe `unsharp positions' as well as sharp ones by using the presheaf $\Rlr$ and its global elements. A quantum particle will not `see' a point in the continuum $\bbR$ as its position, but, depending on its state, it will `see' some global element of $\Rlr$ as its (generalised) position.

These heuristic arguments suggest the possibility of building a model of space based not on the usual continuum $\bbR$, but on the quantity-value presheaf $\Rlr$. The real numbers are embedded (topologically) in $\Rlr$ as a tiny part, they correspond to global elements $\ga=(\mu,\nu)$ such that $\mu=\nu=const_a$, where $a\in\bbR$ and $const_a$ is the constant function with value $a$.

It is no problem to form $\Rlr^3$ or $\Rlr^n$ within the topos, and also externally it is clear what these presheaves look like. It is also possible to generalise notions of distance and metric from $\bbR^n$ to $\IR^n$, \ie powers of the interval domain. This is an intermediate step to generalising metrics from $\bbR^n$ to $\Rlr^n$. Yet, there are many open questions.

The main difficulty is to find a suitable, physically sensible group that would act on $\Rlr^n$. Let us focus on space, modelled on $\Rlr^3$, for now. We picture an element of $\Rlr^3_V$ as a `box': there are three intervals in three independent coordinate directions. We see that in this interpretation, there implicitly is a spatial coordinate system given. Translations are unproblematic, they will map a box to another box. But rotations will map a box to a rotated box that in general cannot be interpreted as a box with respect to the same (spatial) coordinate system. This suggests that one may want to consider a \emph{space of (box) shapes} rather than the boxes themselves. These shapes would then represent generalised, or `unsharp', positions. Things get even more difficult if we want to describe space-time. There is no time operator, so we have to take some liberty when interpreting time intervals as `unsharp moments in time'. More crucially, the Poincar\'e group acting on classical special relativistic space-time is much more complicated than the Galilei group that acts separately on space and time, so it will be harder to find a suitable `space of shapes' on which the Poincar\'e group, or some generalisation of it, will act.

There is ongoing work by the authors to build a model of space and space-time based on the quantity-value presheaf $\Rlr$. It should be mentioned that classical, globally hyperbolic space-times can be treated as domains, as was shown by Martin and Panangaden \cite{MP06}. These authors consider generalised interval domains over so-called bicontinuous posets. This very interesting work serves as a motivation for our ambitious goal of defining a notion of generalised space and space-time suitable for quantum theory and theories `beyond quantum theory', in the direction of quantum gravity and quantum cosmology. Such a model of space and space-time would not come from a discretisation, but from an embedding of the usual continuum into a much richer structure. It is very plausible that domain-theoretic techniques will still apply, which could give a systematic way to define a classical limit of these quantum space-times. For now, we merely observe that the topos approach, which so far applies to non-relativistic quantum theory, provides suggestive structures that may lead to new models of space and space-time.

\section{Conclusion}			\label{Sec_Conclusion}
We developed some connections between the topos approach to quantum theory and domain theory. The topos approach provides new `spaces' of physical relevance, given as objects in a presheaf topos. Since these new spaces are not even sets, many of the usual mathematical techniques do not apply, and we have to find suitable tools to analyse them. In this article, we focused on the quantity-value object $\Rlr$ in which physical quantities take their values, and showed that domain theory provides tools for understanding the structure of this space of generalised, unsharp values. This also led to some preliminary ideas about models of space and space-time that would incorporate unsharp positions, extending the usual continuum picture of space-time.
\medskip

\subsection*{Acknowledgments}
We thank Chris Isham, Prakash Panangaden and Samson Abramsky for valuable discussions and encouragement. Many thanks to Nadish de Silva for providing the counterexample in Section \ref{Sec_VNNoDomainInInfDim}, and to the anonymous referee for a number of valuable suggestions and hints. RSB gratefully acknowledges support by the Marie Curie Initial Training Network in Mathematical Logic  - MALOA - From MAthematical LOgic to Applications, PITN-GA-2009-238381, as well as previous support by Santander Abbey bank. We also thank Felix Finster, Olaf M\"uller, Marc Nardmann, J\"urgen Tolksdorf and Eberhard Zeidler for organising the very enjoyable and interesting ``Quantum Field Theory and Gravity'' conference in Regensburg in September 2010.


\begin{thebibliography}{1}

\bibitem{AJ94} S.~Abramsky, A.~Jung.
\newblock Domain Theory.
\newblock In \textit{Handbook of Logic in Computer Science}, eds. S.~Abramsky, D.M.~Gabbay, T.S.E.~Maibaum, Clarendon Press, 1--168 (1994).

\bibitem{CHLS09}{M.~Caspers, C.~Heunen, N.P.~Landsman, B.~Spitters.}
\newblock Intuitionistic quantum logic of an {$n$}-level system.
\newblock \emph{Found.\ Phys.} \textbf{39}, 731--759 (2009).

\bibitem{CM11} B.~Coecke, K.~Martin.
\newblock A Partial Order on Classical and Quantum States.
\newblock In \textit{New Structures for Physics}, ed. B.~Coecke, Springer, Heidelberg (2011).

\bibitem {Doe05} A.~D\"{o}ring.
\newblock Kochen-Specker theorem for von Neumann algebras.
\newblock {\em Int.\ Jour.\ Theor.\ Phys.} {\bf 44}, 139--160 (2005).

\bibitem{Doe07b} A.~D\"{o}ring.
\newblock Topos theory and `neo-realist' quantum theory.
\newblock In \textit{Quantum Field Theory, Competitive Models}, eds. B. Fauser, J. Tolksdorf, E. Zeidler, Birkh\"auser (2009).

\bibitem{Doe08} A.~D\"{oring}.
\newblock Quantum States and Measures on the Spectral Presheaf.
\newblock \emph{Adv. Sci. Lett.} \textbf{2}, special issue on ``Quantum Gravity, Cosmology and Black Holes'', ed. M. Bojowald, 291--301 (2009).

\bibitem{Doe09} A.~D\"{o}ring.
\newblock Topos quantum logic and mixed states.
\newblock In \textit{Proceedings of the 6th International Workshop on Quantum Physics and Logic (QPL 2009), Oxford}, eds. B. Coecke, P. Panangaden, and P. Selinger.
\newblock \textit{Electronic Notes in Theoretical Computer Science} \textbf{270}(2) (2011).

\bibitem{Doe10} A.~D\"oring.
\newblock The Physical Interpretation of Daseinisation.
\newblock In \textit{Deep Beauty}, ed. Hans Halvorson, 207--238, Cambridge University Press, New York (2011).

\bibitem {DI(1)} A.~D\"{o}ring, and C.J. Isham.
\newblock A topos foundation for theories of physics:
    {I.} Formal languages for physics.
\newblock \emph{J. Math. Phys} {\bf 49}, 053515 (2008).

\bibitem {DI(2)} A.~D\"{o}ring, and C.J. Isham.
\newblock A topos foundation for theories of physics:
    {II.} Daseinisation and the liberation of quantum theory.
\newblock \emph{J. Math. Phys} {\bf 49}, 053516 (2008).

\bibitem {DI(3)} A.~D\"{o}ring, and C.J. Isham.
\newblock A topos foundation for theories of physics:
    {III.} Quantum theory and the representation of
    physical quantities with arrows \mbox{$\breve{A}:\Sig\map\ps\R$}.
\newblock  \emph{J. Math. Phys} {\bf 49}, 053517 (2008).

\bibitem{DI(4)} A.~D\"{o}ring, and C.J. Isham.
\newblock A topos foundation for theories of physics:
            {IV.} Categories of systems.
\newblock \emph{J. Math. Phys} {\bf 49}, 053518 (2008).

\bibitem{DI(Coecke)08}  A.~D\"{o}ring, and C.J. Isham.
\newblock `What is a Thing?': Topos Theory in the Foundations of Physics.
\newblock In \emph{New Structures for Physics}, ed. B.~Coecke, Springer (2011).

\bibitem{EEK98}{T.~Erker, M.~H.~Escard\'{o} and K.~Keimel.}
\newblock The way-below relation of function spaces over semantic domains.
\newblock \emph{Topology and its Applications} \textbf{89}, 61--74 (1998).

\bibitem{Flo09} C.~Flori.
\newblock A topos formulation of consistent histories.
\newblock \textit{Jour.\ Math.\ Phys} {\bf 51} 053527 (2009).

\bibitem{GHK03} G. Gierz, K.H.~Hofmann, K.~Keimel, J.D.~Lawson, M.W.~Mislove, D.S.~Scott.
\newblock \textit{Continuous Lattices and Domains}.
\newblock Encyclopedia of Mathematics and its Applications \textbf{93}, Cambridge University
Press (2003).

\bibitem{HLS09} C.~Heunen, N.P.~Landsman, B.~Spitters.
\newblock A topos for algebraic quantum theory.
\newblock \emph{Comm.\ Math.\ Phys.} \textbf{291}, 63--110 (2009).

\bibitem{HLS09b} C.~Heunen, N.P.~Landsman, B.~Spitters.
\newblock The Bohrification of operator algebras and quantum logic.
\newblock \emph{Synthese}, in press (2011).

\bibitem{HLS11} C.~Heunen, N.P.~Landsman, B.~Spitters.
\newblock Bohrification.
\newblock In \textit{Deep Beauty}, ed. H.~Halvorson, 271--313, Cambridge University Press, New York (2011).

\bibitem{Ish10} C.J.~Isham.
\newblock Topos methods in the foundations of physics.
In \textit{Deep Beauty}, ed. Hans Halvorson, 187--205, Cambridge University Press, New York (2011).

\bibitem{Ish97} C.J. Isham.
\newblock Topos theory and consistent histories: The internal
logic of the set of all consistent sets.
\newblock {\em Int. J. Theor. Phys.}, \textbf{36}, 785--814 (1997).

\bibitem{IB98} C.J.~Isham and J.~Butterfield.
\newblock A topos perspective on the {K}ochen-{S}pecker theorem:
            {I.} {Q}uantum states as generalised
valuations. \newblock {\em Int.\ J.\ Theor.\ Phys.} \textbf{37},
2669--2733 (1998).

\bibitem{IB99} C.J.~Isham and J.~Butterfield.
\newblock A topos perspective on the {K}ochen-{S}pecker theorem:
    {II.} {C}onceptual aspects, and classical analogues.
\newblock  {\em Int.\ J.\ Theor.\ Phys.} \textbf{38}, 827--859 (1999).

\bibitem{IB00} C.J.~Isham, J.~Hamilton and J.~Butterfield.
\newblock A topos perspective on the {K}ochen-{S}pecker theorem:
    {III.} {V}on {N}eumann algebras as the base category.
\newblock  {\em Int.\ J.\ Theor.\ Phys.} \textbf{39}, 1413-1436 (2000).

\bibitem {IB00b} C.J.~Isham and J.~Butterfield.
\newblock Some possible roles for topos theory in quantum
theory and quantum gravity.
\newblock {\em Found.\ Phys.} {\bf 30}, 1707--1735 (2000).

\bibitem{IB02} C.J.~Isham and J.~Butterfield.
\newblock {A topos perspective on the Kochen-Specker theorem:
        {IV.} {I}nterval valuations}.
\newblock {\em Int.\ J.\ Theor.\ Phys} {\bf 41}, 613--639 (2002).

\bibitem{MP06} K.~Martin, P.~Panangaden.
\newblock A Domain of Spacetime Intervals in General Relativity.
\newblock \textit{Commun. Math. Phys.} \textbf{267}, 563–-586 (2006).

\bibitem{Sco70} D.S.~Scott.
\newblock Outline of a mathematical theory of computation.
\newblock In Proceedings of \textit{4th Annual Princeton Conference on Information Sciences and Systems}, 169--176 (1970).

\bibitem{Sco72} D.S.~Scott.
\newblock Continuous lattices.
\newblock In \textit{Toposes, Algebraic Geometry and Logic}, Lecture Notes in Mathematics \textbf{274}, Springer, 93--136 (1972).

\bibitem{Sco72b} D.S.~Scott.
\newblock Formal semantics of programming languages.
\newblock In \textit{Lattice theory, data types and semantics}, Englewood Cliffs, Prentice-Hall, 66--106 (1972).

\bibitem{Wil70} S.~Willard.
\newblock \textit{General Topology}.
\newblock Addison-Wesley Series in Mathematics, Addison Wesley (1970).

\bibitem{YJ96}{L.~Ying-Ming,  L.~Ji-Hua.}
\newblock Solutions  to  two  problems  of  J.D.  Lawson  and  M.  Mislove.
\newblock \emph{Topology and its Applications} \textbf{69}, 153--164 (1996).

\end{thebibliography}
\end{document}